\algrenewcommand\alglinenumber[1]{{\sffamily\footnotesize#1}}
\algnewcommand\Not{\textbf{not}}
\algnewcommand\FALSE{FALSE}
\algnewcommand\TRUE{TRUE}
\newtheorem{thm}{Theorem}[section]
\newtheorem{cor}[thm]{Corollary}
\newtheorem{prop}[thm]{Proposition}
\newtheorem{remark}{Remark}
\newcommand{\refeqq}[1]{~(\ref{#1})}
\newcommand{\myref}[1]{~\ref{#1}}
\newcommand{\mycite}[1]{~\cite{#1}}
\newcommand{\R}{\mathbb{R}}
\newcommand{\rv}{\textit{rv}}
\newcommand{\id}{\textit{id}}
\newcommand{\iid}{\textit{iid}}
\newcommand{\sd}{\textit{sd}}
\newcommand{\LKh}{L\'{e}vy-Khintchin}
\newcommand{\pdf}{\textit{pdf}}
\newcommand{\chf}{\textit{chf}}
\newcommand{\cgf}{\textit{cgf}}
\newcommand{\Levy}{L\'{e}vy}
\newcommand{\Pqo}{\bm{P}\hbox{-\emph{a.s.}}}
\newcommand{\eqd}{\stackrel{d}{=}}
\newcommand{\EXP}[1]{\bm{E}\left[{#1}\right]}
\newcommand{\lch}{\textit{lch}}
\newcommand{\poiss}{\mathcal{P}}
\newcommand{\unif}{\mathcal{U}}
\newcommand{\gam}{\mathcal{G}}
\newcommand{\arem}{$a$-remainder}
\newcommand{\cts}{\mathcal{CTS}}
\newcommand{\bcts}{\mathcal{BCTS}}
\title{\Huge \textbf{Pricing Energy Derivatives in Markets Driven by Tempered Stable and CGMY Processes of Ornstein-Uhlenbeck Type
}}
\author{Piergiacomo \textsc{Sabino}\footnote{piergiacomo.sabino@eon.com} \footnote{The views, opinions, positions or strategies expressed in this article are those of the author
and do not necessarily represent the views, opinions, positions or strategies of, and should not be
attributed to E.ON SE.}\\
Quantitative Risk Management\\
E.ON SE\\
\vspace{5pt}
 Br\"{u}sseler Platz 1, 45131 Essen, Germany
}
\date{}
\begin{document}
    \maketitle \thispagestyle{empty}
        \begin{abstract}
				In this study we consider the pricing of energy derivatives when the evolution of spot prices follows a tempered stable or a CGMY driven Ornstein-Uhlenbeck process. To this end, we first calculate the characteristic function of the transition law of such processes in closed form.  This result is instrumental for the derivation of non-arbitrage conditions such that the spot dynamics is consistent with the forward curve. Moreover, based on the results of Cufaro Petroni and Sabino\mycite{cs20_3}, we also conceive efficient algorithms for the exact simulation of the skeleton of such processes and propose a novel procedure when they coincide with compound Poisson processes of Ornstein-Uhlenbeck type. We illustrate the applicability of the theoretical findings and the simulation algorithms in the context of the pricing different contracts namely, strips of daily call options, Asian options with European style and swing options. Finally, we present an extension to future markets.
\\
\\
\noindent \emph{Keywords}: \Levy-driven Ornstein-Uhlenbeck
Processes; CGMY process; Tempered Stable Distributions; Exact simulation; Energy Markets; Derivative Pricing
        \end{abstract}
        \section{Introduction}\label{sec:introduction}
Most energy and commodity markets exhibit seasonality, mean-reversion high volatilities and occasional distinctive price spikes,
which results in demand for derivative products which protect the holder against
high prices. In equity markets there is a clear evidence that asset returns are not Gaussian and it is common practice to rely on \Levy\ processes, other than the Brownian motion, in order to capture heavy-tails and jumps of the log-prices. Several empirical studies (see for instance Carr and Crosby\mycite{CarrCrosby10}) have shown that CGMY \Levy\ processes introduced by Carr et al.\mycite{CGMY2002}, named after its authors, are a valuable alternative. Moreover, the class of such processes is quite flexible and also encompasses  Variance Gamma processes introduced in Madan and Seneta\mycite{MadanSeneta90} and is on the other hand, a special case of the wider class of bilateral tempered stable processes (see K\"{u}chler and Tappe\mycite{KT2013}).  

Commodity and energy markets however, exhibit mean-reversion which cannot be described by plain \Levy\ processes but rather by \Levy-driven Ornstein-Uhlenbeck (OU) processes. Based on these observations, in this study we consider the pricing of energy derivatives assuming that the spot price is driven by CGMY and tempered stable processes of OU type. The first contribution consists in the derivation of the closed formula of the characteristic function of the transition law of these processes that is instrumental to find non-arbitrage conditions. It also gives the fundamental ingredient to calculate the price of financial derivatives with FFT-based methods. 

Based on the results of Cufaro Petroni and Sabino\mycite{cs20_3}, the second contribution is the derivation of the exact methods for the simulation of the skeleton of CGMY and bilateral tempered stable processes processes of OU type with finite variation. Particular emphasis is given to the case when such processes consists of compound Poisson processes of OU type.

The calibration and the model selection is not the focus of this study, instead we rather illustrate the theoretical applicability of our findings and the proposed simulation algorithms to the pricing of a few energy derivative contracts. As a first application, we consider the pricing of a daily strip of call options on the day-ahead spot price driven by tempered stable OU processes using the FFT technique of Carr and Madan\mycite{Carr1999OptionVU}. As mentioned, the parameters calibration is not the focus of this paper, nevertheless such a task can be easily accomplished combining the FFT pricing with an optimization to minimize the difference between the available option prices and the theoretical ones. Secondly, we consider the pricing of Asian options depending on the day-ahead spot price described by a CGMY-driven OU process via Monte Carlo simulations where we also highlight the differences between our exact simulation schemes and the standard approximation procedures. The last example consists in pricing swing options with the modified version of the Least-Squares Monte Carlo method detailed in Boogert and C. de Jong\mycite{BDJ08, BDJ10} using market models based on CGMY-driven OU process that coincide with compound Poisson processes of OU type. Finally, we show that our results are not restricted to OU processes and to the modeling of spot prices.  Indeed, in the spirit of Benth et al.\mycite{BPV19}, Latini et al.\mycite{LPV19} and Piccirilli et al.\mycite{PSV20}, they can be adapted to capture the Samuelson effect and different implied volatility profiles displayed by options futures. 

The paper is organized as follows. Section\myref{sec:preliminaries} introduces tempered stable and CGMY processes and the general results relatively to \Levy-driven OU processes. In Section\myref{sec:ou:ts:cgmy} we focus on the classical tempered stable and CGMY processes of OU type with finite variation and derive the characteristic function of their transition law. In Section\myref{sec:ou:ts:cgmy:alg} we present the algorithms for the simulation of the skeleton of the processes under study and we focus on the case of compound Poisson processes of OU type. We also present numerical experiments demonstrating their efficiency. The application of these results is illustrated in Section\myref{sec:fin:app} in the context of the pricing of energy derivative contracts, namely daily strips of call options, Asian options with European exercise, swing options written on the day-ahead spot price and futures. Finally Section\myref{sec:conclusions} concludes the paper with an overview of future inquiries and possible further applications.
								

\subsection{Notation}
Before proceeding, we introduce some notation and shortcuts that will be used throughout the paper. We write $\Gamma(\alpha, \beta)$ to denote the gamma distribution with shape parameter $\alpha>0$ and rate parameter $\beta>0$. Moreover, we write $\unif([0,1])$ to denote the uniform distribution in $[0,1]$ and $\poiss(\lambda)$ to denote the Poisson distribution with parameter $\lambda>0$.
We use the shortcuts \id\ and \sd\ for \emph{infinitely divisible} and \emph{self-decomposable} distributions, respectively. We use the shortcut \rv\ for \emph{random variable} and \iid\ for \emph{independently and identically distributed} whereas we use \chf, \lch\, \cgf\ and \pdf\ as shortcuts for \emph{characteristic function}, \emph{logarithmic characteristic}, \emph{cumulant generating function} and \emph{density function}, respectively. 

\section{Preliminaries}\label{sec:preliminaries}

Take a \Levy\ process $L(\cdot)$ of classic tempered stable  type namely, with \Levy\ measure having density 
\begin{equation}
\nu(x) = \nu_p(x) + \nu_x(x) = c_p\,\frac{e^{-\beta_p\,x}}{x^{1+\alpha_p}}\mathds{1}_{x\ge 0} + c_n\,\frac{e^{\beta_n\,x}}{|x|^{1+\alpha_n}}\mathds{1}_{x< 0}\label{eq:cts:density}
\end{equation}
where $c_p, c_n, \beta_p, \beta_n$ are all positive numbers, $\alpha_p<2$ and $\alpha_n<2$.
Hereafter we will denote with $\bcts(\alpha_p, \alpha_n, \beta_p, \beta_n, c_p\,t, c_n\,t)$ the law of $L(t)$.

Different applications of such a process can be found among other in Koponen\mycite{Koponen95}, Carr et al.\mycite{CGMY2002} Poirot and Tankov\mycite{PoirotTankov2007} and Ballotta and Kyriakou\mycite{BK14}.  In particular, the model introduced in Carr et al. is named CGMY and assumes $C=c_p=c_n$, $Y=\alpha_p=\alpha_n$, $G=\beta_n$ and $M=\beta_p$ from the names of the authors. It can also be proven that a classic tempered stable process is a time-changed Brownian motion provided $c_p=c_n$ and $\alpha_p=\alpha_n=\alpha>-1$ (see Cont and Tankov\mycite{ContTankov2004} Proposition 4.1). For sake of completeness, it is worthwhile mentioning that we are referring to classic tempered stable processes because different processes can be constructed applying an alternative tempering function rather than the exponential function used in\refeqq{eq:cts:density} (see  Rosi\'nski\mycite{ROSINSKI2007677}). An overview of such processes, named general tempered stable processes, can be found in Grabchak\mycite{Grabchak16}.

In the following we consider the subset of classic tempered stable processes with finite variation for which it holds $\alpha_p<1$ and $\alpha_n<1$ and in particular when $\alpha_p<0$ and $\alpha_n<0$ the subset consists of Poisson processes (see Cont and Tankov\mycite{ContTankov2004}). 
Due to the fact that any process of finite variation can be seen as the difference of two independent subordinators, the process $L(\cdot)$ can be written as $L(t) = L_p(t) - L_n(t)$ where $L_p(\cdot)$ and $L_n(\cdot)$ are two classic tempered stable subordinators with \Levy\ densities $\nu_p(x)$ and $\nu_n(-x)$, respectively. In the following we will dub classic tempered stable subordinators with CTS, whereas the full bilateral case will be denoted with BCTS. Moreover, we will denote the law of a CTS subordinator at time $t$ with $\cts(\alpha, \beta, c\,t)$.

Consider now an Ornstein-Uhlenbeck (OU) process $X(\cdot)$ solution of the stochastic differential equation
            \begin{eqnarray}\label{eq:genOU_sde}
              dX(t) &=&  -bX(t)dt + dL(t) \quad\qquad X(0)=X_0\quad \Pqo\qquad
              b>0
            \end{eqnarray}
namely, 
\begin{eqnarray}
X(t) &=& X_0\,e^{-bt} + Z(t) \qquad\quad Z(t)=\int_0^te^{-b\,(t-s)}dL(s). \label{eq:sol:OU}
\end{eqnarray}
\begin{eqnarray}
Z(t) &=& Z_p(t) - Z_n(t) \qquad\quad Z_d(t)=\int_0^te^{-b\,(t-s)}dL_d(s), d\in\{p, n\}. \label{eq:sub}
\end{eqnarray}
\noindent Following the convention in Barndorff-Nielsen and Shephard\mycite{BNSh01}, $X(\cdot)$ is then named OU-BCTS process or, if the above parameter constrain holds, OU-CGMY process.

There is a close relation between the concept of self-decomposability and the theory of \Levy\ driven OU processes, indeed as observed in Barndorff-Nielsen et al.\mycite{BJS1998}, the solution process\refeqq{eq:sol:OU} is stationary if and only if its
\chf $\varphi_X(u,t)$ is constant in time and steadily coincides
with the \chf\ $\overline{\varphi}_X(u)$ of the \sd\ invariant
initial distribution that turns out to be decomposable according to
\begin{equation*}
\overline{\varphi}_X(u)=\overline{\varphi}_X(u\,e^{-b\,
t})\varphi_Z(u,t)
\end{equation*}
where now, at every given $t$, $\varphi_Z(u,t)=e^{\psi_Z(u,t)}$ denotes the \id \chf\ of the \rv\ $Z(t)$ in\refeqq{eq:sol:OU} and $\psi_Z(u,t)$ its \lch. We remark that the process $Z(\cdot)$ is not a \Levy\ process but rather an additive process. 

We recall here that a law with \chf\
$\eta(u)$ is said to be \sd\ (see Sato\mycite{Sato}, Cufaro
Petroni~\cite{Cufaro08}) when for every $0<a<1$ we can find another
law with \chf\ $\chi_a(u)$ such that
\begin{equation}\label{aremchf}
    \eta(u)=\eta(au)\chi_a(u).
\end{equation}
Of course a \rv $X$ with \chf\ $\eta(u)$ is also
said to be \sd\ when its law is \sd, and looking at the definitions
this means that for every $0<a<1$ we can always find two
\emph{independent} \rv's -- a $Y$ with the same law of $X$, and a
$Z_a$ with \chf\ $\chi_a(u)$ -- such that in distribution
\begin{equation}\label{sdec-rv}
    X\eqd aY+Z_a
\end{equation}
Hereafter the \rv\ $Z_a$ will be called the \emph{\arem} of $X$ and
in general has an \id\ (see Sato\mycite{Sato})

This last statement apparently means that the law of $Z(t)$ in the
solution\refeqq{eq:sol:OU} coincides with that of the \arem\ of the
\sd, stationary law $\overline{\varphi}_X$ provided that $a=e^{-b\,
t}$. It is easy indeed to see
from\refeqq{eq:sol:OU} that the \chf\ of the time homogeneous
transition law with a degenerate initial condition $X(0)=x_0,\;\Pqo$
is
\begin{equation}\label{transchf}
    \varphi_X(u,t|x_0)=e^{\,ix_0ue^{-bt}}\varphi_Z(u,t)=\frac{\overline{\varphi}_X(u)\,e^{\,ix_0ue^{-bt}}}{\overline{\varphi}_X(u\,e^{-b\, t})}
\end{equation}
moreover we have
\begin{equation}
    \psi_Z(u, t) = \psi_{Z_p}(u) + \psi_{Z_n}(-u) \label{eq:ou:cts:lch}
\end{equation}
and
\begin{equation}
    \psi_X(u,t|x_0)=iux_0e^{-bt} + \psi_Z(u,t) = iux_0e^{-bt} + \psi_{Z_p}(u, t) + \psi_{Z_n}(-u, t).
\label{eq:cumulant:function}
\end{equation}
where $\psi_{Z_d}(u,t)=\ln\varphi_{Z_d}(u,t), d\in\{p, n\}$ is the \lch\ of $Z_d(t), d\in\{p, n\}$.
Moreover, the transition \lch\ of a OU process can also
be written in terms of the corresponding $\psi_L(u)$  in
the form
\begin{equation}
    \psi_X(u,t|x_0)=iux_0e^{-bt} + \psi_Z(u,t) = iux_0e^{-b\, t} + \int_0^t\psi_L\left(ue^{-b\,s}\right)ds.
\label{eq:lch:ou}
\end{equation}
Finally in virtue of the results of Cufaro Petroni and Sabino\mycite{cs20_3}, one can relate the \Levy\ density $\nu_Z(x,t)$ of $Z(t)$ to that of the BDLP $L(\cdot)$ at $t=1$ denoted with $\nu_L(x)$ 
\begin{eqnarray}
    \nu_Z(x,t) &=& \frac{1}{b\,|x|}
              \left\{
                   \begin{array}{ll}
                   \int_{\,^x\!/_a}^x \nu_L(y)dy   & \;\hbox{$x<0$} \\
                                 \\
                   \int_x^{\,^x\!/_a} \nu_L(y)dy   & \;\hbox{$x>0$}
                   \end{array}
             \right. \quad a = e^{-bt}.
    \label{eq:levy:measures:ou++_Z}
\end{eqnarray}


\section{OU-BCTS and OU-CGMY processes}\label{sec:ou:ts:cgmy}
In this section we study OU-BCTS and OU-CGMY processes with finite variation and distinguish the case where the BDLP is of infinite activity, $0<\alpha_p<1$ and $0<\alpha_n<1$, to that of finite activity namely when $Z(\cdot)$ is a compound Poisson process. We do not discuss the setting $\alpha_p=\alpha_n$ because it is already covered in Sabino\mycite{Sabino20b} and corresponds Variance Gamma driven OU process, therefore of infinite activity and finite variation. 

\subsection{Infinite activity}\label{subsec:ou:ts:infinite}
Apparently, the study of the transition law of a OU-BCTS process $X(\cdot)$ coincides with the study of the process $Z(\cdot)$ and in particular of the processes $Z_p(t)$ and $Z_n(t)$ defined in\refeqq{eq:sub}. One of course can rely on these last two processes to build OU-CTS processes. 

Cufaro Petroni and Sabino\mycite{cs20_3} and Qu et al.\mycite{QDZ20} designed an exact decomposition of the transition law of OU-CTS processes as the convolution of two independent \rv's plus a degenerate term. For simplicity, we report this result here below in addition because, such a OU process is driven by a CTS subordinator, we consider only $Z_p(\cdot)$.

\begin{prop}\label{prop:ou:ts}
For $0< \alpha_p < 1 $, and at every $t>0$, the pathwise
solution of an OU-CTS
equation\refeqq{eq:sol:OU} with $X(0)=X_0,\;\Pqo$ is in
distribution the sum of three independent \rv's
\begin{equation}
X(t)=aX_0 + Z_p(t)\, \eqd\, aX_0+X_1 + X_2 \qquad\quad a=e^{-b\,
t}\label{eq:prop:outs_1}
\end{equation}
where $X_1$ is distributed according to the law $\cts\!\left(\alpha_p,
\frac{\beta_p}{a}, c_p\,\frac{1 - a^\alpha_p}{\alpha_p\, b}\right)$, whereas
\begin{equation*}
X_2=\sum_{k=1}^{N_a} J_k
\end{equation*}
is a compound Poisson \rv\ where $N_a$ is an independent Poisson
\rv\ with parameter
\begin{equation}
 \Lambda_a =
 \frac{c_p\,\beta_p^{\alpha_p}\Gamma(1-\alpha_p)}{b\,\alpha_p^2 a^{\alpha_p}}\,\left(1-a^{\alpha_p}+a^{\alpha_p}\log a^{\alpha_p}\right) \label{eq:outs:intensity}
\end{equation}
and $J_k, k>0$ are \iid\ \rv's with density
\begin{equation}
 f_J(x)=\frac{\alpha_p\, a^{\alpha_p}}{1-a^\alpha_p+a^{\alpha_p}\log a^{\alpha_p}} \int_1^{\frac{1}{a}}\frac{x^{-\alpha_p} \left(\beta_p\,
  v\right)^{1-\alpha_p}e^{-\beta_p v\, x}}{\Gamma(1-\alpha_p)}\, \frac{v^{\alpha_p}-1}{v}\, dv \label{eq:outs:jumps}
\end{equation}
namely, a mixture of a gamma law and a distribution with density
\begin{equation}
  f_V(v) = \frac{\alpha_p\, a^\alpha_p}{1-a^{\alpha_p}+a^{\alpha_p}\log
    a^{\alpha_p}}\, \frac{v^{\alpha_p}-1}{v}\qquad\quad 1\le v\le\,^1/_a.\label{eq:ou:cts:mixture:rate}
\end{equation}
\end{prop}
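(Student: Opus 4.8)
The plan is to run the whole argument at the level of \Levy\ measures. Since $L_p$ is a driftless \cts\ subordinator of finite variation ($0<\alpha_p<1$), the process $Z_p(\cdot)$ is again a driftless finite-variation subordinator, and for such laws the characteristic exponent is \emph{linear} in the \Levy\ density; hence an additive splitting of the density is equivalent to an independent decomposition of the law. The degenerate contribution $aX_0$ is already isolated in the transition \lch\ \eqref{eq:cumulant:function}, so it suffices to show that the \Levy\ density $\nu_{Z_p}(x,t)$ of $Z_p(t)$, supplied by \eqref{eq:levy:measures:ou++_Z}, splits as $\nu_{Z_p}=\nu_{X_1}+\nu_{X_2}$ with $\nu_{X_1}$ the density of a \cts\ law and $\nu_{X_2}$ a \emph{finite} measure, i.e.\ the \Levy\ measure of a compound Poisson \rv.

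First I would make \eqref{eq:levy:measures:ou++_Z} explicit for $L_p$, whose density at $t=1$ is $\nu_p(y)=c_p\,e^{-\beta_p y}/y^{1+\alpha_p}$. With $a=e^{-bt}$ and the substitution $y=vx$ in the inner integral this gives, for $x>0$,
\[
\nu_{Z_p}(x,t)=\frac{c_p}{b\,x^{1+\alpha_p}}\int_1^{1/a}\frac{e^{-\beta_p vx}}{v^{1+\alpha_p}}\,dv .
\]
As $x\to0^+$ the exponential tends to $1$ and the prefactor yields the non-integrable singularity $\tfrac{c_p(1-a^{\alpha_p})}{\alpha_p b}\,x^{-1-\alpha_p}$. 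This pins down the continuous part: I would subtract off the \cts\ density
\[
\nu_{X_1}(x)=\frac{c_p(1-a^{\alpha_p})}{\alpha_p b}\,\frac{e^{-(\beta_p/a)x}}{x^{1+\alpha_p}},
\]
that is, the \Levy\ density of $\cts\!\bigl(\alpha_p,\beta_p/a,\,c_p(1-a^{\alpha_p})/(\alpha_p b)\bigr)$, which has exactly the same $x^{-1-\alpha_p}$ behaviour at the origin.

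The heart of the computation is to evaluate $\nu_{X_2}:=\nu_{Z_p}-\nu_{X_1}$ in closed form. Integrating by parts in the $v$-integral (integrating $v^{-1-\alpha_p}$, differentiating $e^{-\beta_p vx}$) turns the boundary contributions into a term proportional to $(e^{-\beta_p x}-a^{\alpha_p}e^{-\beta_p x/a})/x^{1+\alpha_p}$; upon subtracting $\nu_{X_1}$ the two $e^{-\beta_p x/a}$ pieces combine with coefficient exactly $-1$, and after collecting terms one arrives at
\[
\nu_{X_2}(x)=\frac{c_p\beta_p}{b\alpha_p}\,x^{-\alpha_p}\int_1^{1/a}\bigl(1-v^{-\alpha_p}\bigr)e^{-\beta_p vx}\,dv .
\]
Two properties must then be checked, and they are the real content of the statement: that $\nu_{X_2}\ge0$, which is immediate since $1-v^{-\alpha_p}\ge0$ for $v\ge1$, and that $\int_0^\infty\nu_{X_2}<\infty$, which holds because the surviving $x^{-\alpha_p}$ factor is integrable at the origin (here $\alpha_p<1$ is used) while $e^{-\beta_p vx}$ forces exponential decay at infinity. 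This is exactly where the rate $\beta_p/a$ is forced: it is the only choice of continuous part for which the leftover density is manifestly non-negative.

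Finally I would integrate $\nu_{X_2}$ to obtain the Poisson parameter. Applying Fubini together with $\int_0^\infty x^{-\alpha_p}e^{-\beta_p vx}\,dx=\Gamma(1-\alpha_p)(\beta_p v)^{\alpha_p-1}$ reduces $\Lambda_a=\int_0^\infty\nu_{X_2}(x)\,dx$ to $\tfrac{c_p\beta_p^{\alpha_p}\Gamma(1-\alpha_p)}{b\alpha_p}\int_1^{1/a}(v^{\alpha_p-1}-v^{-1})\,dv$, whose elementary evaluation produces the factor $\alpha_p^{-1}a^{-\alpha_p}(1-a^{\alpha_p}+a^{\alpha_p}\log a^{\alpha_p})$ and hence the announced $\Lambda_a$ of \eqref{eq:outs:intensity}. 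Dividing, $f_J(x)=\nu_{X_2}(x)/\Lambda_a$ is precisely \eqref{eq:outs:jumps}; rewriting $1-v^{-\alpha_p}=(v^{\alpha_p}-1)/v^{\alpha_p}$ exhibits the integrand as a $\Gamma(1-\alpha_p,\beta_p v)$ density in $x$ times the mixing weight \eqref{eq:ou:cts:mixture:rate}, which the same integral shows to be normalised over $[1,1/a]$. Collecting the three pieces and invoking the uniqueness of the \LKh\ representation yields \eqref{eq:prop:outs_1}. The main obstacle is not conceptual but lies in the bookkeeping of the integration by parts and the exact cancellation of the singular boundary terms; once $\nu_{X_2}$ is in the displayed form, non-negativity, integrability and the evaluation of $\Lambda_a$ are routine.
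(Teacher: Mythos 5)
Your argument is correct. Note first that the paper itself does not prove Proposition\myref{prop:ou:ts}: the result is imported from Cufaro Petroni and Sabino\mycite{cs20_3} and Qu et al.\mycite{QDZ20} (``we report this result here below''), so there is no in-paper proof to match. Those sources, like the paper's own closely related computation in the proof of Proposition\myref{prop:chf:sub}, work on the Fourier side: they pass from\refeqq{eq:levy:measures:ou++_Z} (equivalently from $\psi_{Z_p}(u,t)=\int_0^t\psi_{L_p}(ue^{-bs})\,ds$) to the \lch\ of $Z_p(t)$ and then split that expression into the \lch\ of a $\cts$ law plus that of a compound Poisson \rv. You instead perform the splitting directly at the level of the \Levy\ density, $\nu_{Z_p}=\nu_{X_1}+\nu_{X_2}$, and only at the end invoke linearity of the \LKh\ exponent in the \Levy\ measure for driftless finite-variation subordinators. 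The two routes are essentially dual and hinge on the same integration by parts, but yours has a concrete advantage: the admissibility of the decomposition (non-negativity of $\nu_{X_2}$, hence the fact that $X_1$ really splits off as an independent summand) is visible pointwise from $1-v^{-\alpha_p}\ge 0$, rather than being hidden in the positivity of $\Lambda_a$ and $f_V$. I verified the computations: the boundary terms of the integration by parts cancel the $e^{-\beta_p x/a}$ contributions exactly, $\int_0^\infty\nu_{X_2}(x)\,dx$ evaluates to\refeqq{eq:outs:intensity}, and $\nu_{X_2}/\Lambda_a$ reproduces\refeqq{eq:outs:jumps} with mixing density\refeqq{eq:ou:cts:mixture:rate}, whose normalisation follows from the same elementary integral. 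One cosmetic caveat: the rate $\beta_p/a$ is not literally the only choice making the remainder non-negative and integrable (any larger rate would also do both); it is the choice for which $X_1$ is the $\cts$ law announced in the statement and for which the remainder collapses to the clean single-integral form you display.
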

The extension to the bilateral OU-BCTS process is straightforward, for instance the simulation algorithms consist of repeating the procedure for an OU-CTS process two times. 

The main contribution of this subsection is the derivation of the \lch\ and hence the \chf\ and the moment generating function of $Z(t)$ that will be instrumental to find the risk neutral conditions for market models based on OU-BCTS processes and to the pricing of derivative contracts using FFT methods.

\begin{prop}\label{prop:chf:sub}
The \lch\ $\psi_{Z_p}(u, t)$, $u\in\R$ with $0<\alpha_p<1$ can be represented as:
\begin{equation}
\psi_{Z_p}(u, t) = -\frac{c_p\,\beta_p^{\alpha_p}\,\Gamma(1-\alpha_p)}{\alpha_p\,b}\left[I\left(u, \alpha_p, \beta_p, \frac{\beta_p}{a}\right) + \log a\right],\quad a=e^{-b\,t}
\label{eq:lch:ou:cts}
\end{equation}
with
\begin{eqnarray}
I\left(u, \alpha, \beta_1, \beta_2\right) &=& \int_{\beta_1}^{\beta_2} z^{-1-\alpha}(z - iu)^{\alpha}dz=\nonumber\\
&=&-\frac{1}{\alpha}\left[\left(\frac{u}{i\,\beta_2}\right)^{\alpha}\, _2F_1\left(-\alpha, -\alpha, 1-\alpha, -\frac{i\,\beta_2}{u}\right) \right.- \nonumber\\
&=&\left.\left(\frac{u}{i\,\beta_1}\right)^{\alpha}\,_2F_1\left(-\alpha, -\alpha, 1-\alpha, -\frac{i\,\beta_1}{u}\right)  \right] 
\label{eq:J_W}
\end{eqnarray}
\noindent where $_2F_1(a, b, c, x)$ is the hypergeometric function, $0<\alpha<1$, $\beta_1>0$ and $\beta_2>0$.
Finally, taking $\alpha=1/2$ the hypergeometric function in\refeqq{eq:J_W} can be expressed in terms of elementary functions as
\begin{equation*}
_2F_1\left(-\frac{1}{2}, -\frac{1}{2}, \frac{1}{2}, -x\right) =\sqrt{x + 1} - \sqrt{x}\sinh^{-1}\left(\sqrt{x}\right) 
\end{equation*}
\noindent where $\sinh^{-1}x = \log\left(\sqrt{x^2 + 1} + x\right)$.

\end{prop}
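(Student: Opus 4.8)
The plan is to combine the integral representation of the transition \lch\ in\refeqq{eq:lch:ou} with the explicit \lch\ of the driving CTS subordinator, and then to reduce the resulting integral to the function $I$ of\refeqq{eq:J_W} and evaluate it in closed form. First I would record the \lch\ of the BDLP $L_p(\cdot)$. Since $L_p(\cdot)$ is a finite-variation subordinator with no drift and \Levy\ density $\nu_p(x)=c_p\,e^{-\beta_p x}x^{-1-\alpha_p}\mathds{1}_{x\ge0}$ as in\refeqq{eq:cts:density}, the \LKh\ formula gives $\psi_{L_p}(u)=\int_0^\infty(e^{iux}-1)\nu_p(x)\,dx$. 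Splitting off the exponential tempering and using the standard identity $\int_0^\infty(e^{-sx}-1)x^{-1-\alpha}\,dx=\Gamma(-\alpha)s^{\alpha}$ (valid for $0<\alpha<1$) yields
\begin{equation*}
\psi_{L_p}(u)=c_p\,\Gamma(-\alpha_p)\bigl[(\beta_p-iu)^{\alpha_p}-\beta_p^{\alpha_p}\bigr]=-\frac{c_p\,\Gamma(1-\alpha_p)}{\alpha_p}\bigl[(\beta_p-iu)^{\alpha_p}-\beta_p^{\alpha_p}\bigr].
\end{equation*}

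Specialising\refeqq{eq:lch:ou} to the subordinator part gives $\psi_{Z_p}(u,t)=\int_0^t\psi_{L_p}(u\,e^{-bs})\,ds$; here the constant term $-\beta_p^{\alpha_p}$ integrates to $-\beta_p^{\alpha_p}t=\beta_p^{\alpha_p}(\log a)/b$, which will produce the $\log a$ contribution in\refeqq{eq:lch:ou:cts}. The key manipulation is the change of variable $z=\beta_p\,e^{bs}$ in $\int_0^t(\beta_p-iu\,e^{-bs})^{\alpha_p}\,ds$. As $s$ runs from $0$ to $t$, $z$ runs from $\beta_p$ to $\beta_p/a$, and $\beta_p-iu\,e^{-bs}=\beta_p\,z^{-1}(z-iu)$, so with Jacobian $ds=dz/(bz)$ the integrand becomes $\beta_p^{\alpha_p}z^{-1-\alpha_p}(z-iu)^{\alpha_p}/(bz)\cdot z$; this integral equals $(\beta_p^{\alpha_p}/b)\,I(u,\alpha_p,\beta_p,\beta_p/a)$ with $I$ as in\refeqq{eq:J_W}. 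Collecting the prefactors then reproduces\refeqq{eq:lch:ou:cts}.

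To obtain the hypergeometric expression for $I$ itself, I would exhibit an antiderivative of the integrand. Writing $z^{-1-\alpha}(z-iu)^{\alpha}=z^{-1}(1-iu/z)^{\alpha}$ and setting $y=z/(iu)=-iz/u$, one checks that
\begin{equation*}
F(z)=-\frac{1}{\alpha}\Bigl(\frac{u}{iz}\Bigr)^{\alpha}{}_2F_1\!\Bigl(-\alpha,-\alpha,1-\alpha,-\frac{iz}{u}\Bigr)
\end{equation*}
satisfies $F'(z)=z^{-1-\alpha}(z-iu)^{\alpha}$, whence $I=F(\beta_2)-F(\beta_1)$. After the substitution, the verification of $F'$ reduces to the termwise series identity $\tfrac{d}{dy}\bigl[y^{-\alpha}{}_2F_1(-\alpha,-\alpha,1-\alpha,y)\bigr]=-\alpha\,y^{-1-\alpha}(1-y)^{\alpha}$, which follows from $(n-\alpha)(-\alpha)_n=(-\alpha)_{n+1}$, $(1-\alpha)_n=(-\alpha)_{n+1}/(-\alpha)$ and the binomial series $\sum_n(-\alpha)_n\,y^n/n!=(1-y)^{\alpha}$.

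Finally, for $\alpha=1/2$ I would substitute the stated reduction of ${}_2F_1(-\tfrac12,-\tfrac12,\tfrac12,-x)$ into the closed form of $I$; this reduction follows from the same antiderivative computation specialised to $\alpha=1/2$, or can be checked directly by differentiation. The main obstacle I anticipate is not the algebra but the bookkeeping of \emph{branch cuts and principal values}: since $u\in\R$ the quantities $(z-iu)^{\alpha}$, $(u/iz)^{\alpha}$ and the hypergeometric argument $-iz/u$ are genuinely complex, so one must fix consistent branches in order that the antiderivative identity, the evaluation at the endpoints $\beta_p$ and $\beta_p/a$, and the recombination with $\psi_{L_p}$ all hold on the same sheet. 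Once the branches are pinned down by continuity in $u$ from a reference value and analytic continuation, the steps above close the argument.
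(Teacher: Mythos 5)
Your proposal is correct and follows essentially the same route as the paper: the only cosmetic difference is that you reach the integral $I$ of\refeqq{eq:J_W} by time-integrating $\psi_{L_p}(u\,e^{-bs})$ via\refeqq{eq:lch:ou}, whereas the paper starts from the \Levy\ density of $Z_p(t)$ given by\refeqq{eq:levy:measures:ou++_Z} and invokes Lemma 2.5 of K\"{u}chler and Tappe, both reducing to the same change of variables and the same integral. The closed form of $I$ is then obtained in both cases by exhibiting the same ${}_2F_1$ antiderivative and differentiating to check it, and your remark about fixing consistent branches is a point of care that the paper leaves implicit.
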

\begin{proof}
From\refeqq{eq:levy:measures:ou++_Z} it also results
\begin{equation*}
 \nu_{Z_p}(u,t) = \frac{c_p}{b\,x}\int_{x}^{\frac{x}{a}}\frac{e^{-\beta_p y}}{y^{\alpha_p + 1}}\,dy
             \;=\; \frac{c_p}{b}\int_{1}^{\frac{1}{a}}\frac{e^{-\beta_p wx}}{x^{\alpha_p + 1}\, w^{\alpha_p + 1}}\,dw 						
\end{equation*}
therefore because of the \LKh\ theorem 
\begin{equation*}
 \psi_{Z_p}(u,t) = \frac{1}{b}\int_{1}^{\frac{1}{a}}w^{-\alpha_p - 1}dw\int_0^{\infty}c_p\,\left (e^{i\,u\,x} - 1\right) \frac{e^{-\beta_p wx}}{x^{\alpha_p + 1}}\,dx 						
\end{equation*}
The second integral is the \lch\ of a $\cts(\alpha_p, \beta_p, c_p)$ law with $0<\alpha_p<1$ therefore from Lemma 2.5 of K\"{u}chler and Tappe\mycite{KT2013} we have
\begin{eqnarray*}
 \psi_{Z_p}(u,t) &=& \frac{c_p\,\Gamma(-\alpha_p)}{b}\int_{1}^{\frac{1}{a}}\frac{\left(\beta_p\,w - i\,u\right)^{\alpha_p} - (\beta_p\,w)^{\alpha_p}}{w^{\alpha_p + 1}}dw\\
&=& \frac{c_p\,\Gamma(-\alpha_p)}{b}\left[\int_{1}^{\frac{1}{a}}\frac{\left(\beta_p\,w - i\,u\right)^{\alpha_p}}{w^{\alpha_p + 1}}dw - \beta_p^{\alpha_p}\int_a^{\frac{1}{a}}\frac{dw}{w}\right]\\
&=& \frac{c_p\,\beta_p^{\alpha}\,\Gamma(-\alpha_p)}{b}\left[\int_{\beta_p}^{\frac{\beta_p}{a}} z^{-1-\alpha_p}(z - iu)^{\alpha_p}dz + \log a\right] \\
\end{eqnarray*}
where of course $-\alpha_p\Gamma(-\alpha_p)=\Gamma(1-\alpha_p)$ and in last step we have used the change of variables $	\beta_p\,w = z$. 
In order to conclude the proof, we first observe that under the special case $\gamma = \alpha + 1$ the derivative of the hypergeometric function is
\begin{equation*}
\frac{d}{d\,x}\,_2F_1\left(\alpha, \beta, \alpha+1, x\right) = \frac{d}{d\,x}\,_2F_1(\beta, \alpha, \alpha+1, x) = 
-\frac{\alpha \left( (1-x)^{\beta} -\, _2F_1(\alpha, \beta, \alpha+1, x\right))}{z}
\end{equation*}
then with some algebra we get
\begin{equation*}
\frac{d}{d\,x}\left(z^{-\alpha}\,_2F_1\left(-\alpha, -\alpha, 1-\alpha, -\frac{i\,z}{u}\right)\right) = 
\alpha\left(\frac{i}{u}\right)^{\alpha}z^{\alpha + 1}\,\left(z - i\,u\right),
\end{equation*}
therefore we can write the integral
\begin{eqnarray*}
I\left(u, \alpha, \beta_p, \frac{\beta_p}{a}\right)&=&-\frac{1}{\alpha}\left[\left(\frac{u}{i\,\beta_2}\right)^{\alpha}\, _2F_1\left(-\alpha, -\alpha, 1-\alpha, -\frac{i\,\beta_2}{u}\right) \right.- 
\nonumber\\
&=&\left.\left(\frac{u}{i\,\beta_1}\right)^{\alpha}\,_2F_1\left(-\alpha, -\alpha, 1-\alpha, -\frac{i\,\beta_1}{u}\right)  \right] 
\end{eqnarray*}
as claimed.
\end{proof}
\begin{remark}
Several transformation and recursion formulas are applicable to the hypergeometric functions. Using 9.131.1 in Gradshteyn and Rizhik\mycite{gradshteyn2007} we have
\begin{equation*}
_2F_1(-\alpha, -\alpha, 1-\alpha, x) = (1-x)^{\alpha + 1}\,_2F_1(1, 1, 1-\alpha, x)
\end{equation*} 
and accordingly\refeqq{eq:J_W} becomes
\begin{eqnarray}
I\left(u, \alpha, \beta_1, \beta_2\right) 
&=&-\frac{i}{\alpha\,u}\left[\beta_2 ^{-\alpha}\left(\beta_2 - i\,u\right)^{\alpha+1} \, _2F_1\left(1, 1, 1-\alpha, -\frac{i\,\beta_2}{u}\right) \right.- \nonumber\\
&&\left.\beta_1 ^{-\alpha}\left(\beta_1 - i\,u\right)^{\alpha+1}\, _2F_1\left(1, 1, 1-\alpha, -\frac{i\,\beta_1}{u}\right) \right]
\label{eq:J_W_2}
\end{eqnarray}

\end{remark}
\begin{cor}\label{corr:mgf:sub}
The \cgf\ $m_{Z_p}(s,t)= \ln\EXP{e^{s\,Z_p(t)}}$ of $Z_p(t)$ with $0<\alpha_p<1$ exists for $s<\beta_p$ and is:
\begin{equation}\label{eq:mgf:ou:cts}
 m_{Z_p}(s,t) = -\frac{c\,\beta_p^{\alpha_p}\,\Gamma(1-\alpha_p)}{\alpha_p\,b}\left[\tilde{I}\left(s, \alpha_p, \beta_p, \frac{\beta_p}{a}\right) + \log a\right] \quad s<\beta_p
\end{equation}
where
\begin{eqnarray}
\tilde{I}\left(s, \alpha, \beta_1, \beta_2\right) &=& \int_{\beta_1}^{\beta_2} z^{-1-\alpha}(z - s)^{\alpha}dz=\nonumber\\
&&
\frac{1}{\alpha\,s}\left[\beta_2 ^{-\alpha}\left(\beta_2 - s\right)^{\alpha+1} \, _2F_1\left(1, 1, 1-\alpha, \frac{\beta_2}{s}\right) \right.- \nonumber\\
&&\left.\beta_1 ^{-\alpha}\left(\beta_1 - s\right)^{\alpha+1}\, _2F_1\left(1, 1, 1-\alpha, -\frac{\beta_1}{s}\right) \right]\label{eq:I_tilde}
\end{eqnarray}
\end{cor}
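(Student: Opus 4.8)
The plan is to read off the \cgf\ as the analytic continuation of the \lch\ $\psi_{Z_p}(u,t)$ already computed in Proposition \ref{prop:chf:sub}, using the formal correspondence $iu\mapsto s$. Indeed $m_{Z_p}(s,t)=\ln\EXP{e^{sZ_p(t)}}$ and $\psi_{Z_p}(u,t)=\ln\EXP{e^{iuZ_p(t)}}$ are the same functional evaluated at a real argument $s$ and at an imaginary argument $iu$ respectively, so I would argue that the explicit expression \refeqq{eq:lch:ou:cts} extends holomorphically to a complex neighbourhood of the relevant domain and that, restricted to the real axis, it reduces to \refeqq{eq:mgf:ou:cts}. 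Likewise, comparing the defining integrals, $\tilde I(s,\alpha,\beta_1,\beta_2)$ is literally $I(u,\alpha,\beta_1,\beta_2)$ with $iu$ replaced by $s$, so only the domain of validity and the branch choices need genuine checking.

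First I would settle the existence claim. Since $Z_p(t)$ is \id\ with \Levy\ density $\nu_{Z_p}(x,t)=\frac{c_p}{b}\int_1^{1/a}\frac{e^{-\beta_p w x}}{x^{\alpha_p+1}w^{\alpha_p+1}}\,dw$ supported on $x>0$ (as recalled in the proof of Proposition \ref{prop:chf:sub}), the exponential moment $\EXP{e^{sZ_p(t)}}$ is automatically finite for $s\le0$ because $Z_p(t)\ge0$, while for $s>0$ it is finite iff $\int_1^{\infty}e^{sx}\nu_{Z_p}(x,t)\,dx<\infty$. As $x\to\infty$ the inner $w$-integral is dominated by its lower endpoint $w=1$, so $\nu_{Z_p}(x,t)$ decays like $e^{-\beta_p x}$; hence the moment exists precisely when $s<\beta_p$, which fixes the stated domain.

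Next I would simply repeat the \LKh\ computation of Proposition \ref{prop:chf:sub} with $e^{iux}-1$ replaced by $e^{sx}-1$, obtaining
\begin{equation*}
m_{Z_p}(s,t)=\frac{1}{b}\int_1^{1/a}w^{-\alpha_p-1}\,dw\int_0^{\infty}c_p\bigl(e^{sx}-1\bigr)\frac{e^{-\beta_p w x}}{x^{\alpha_p+1}}\,dx,
\end{equation*}
where the inner integral is now the \cgf\ of a $\cts(\alpha_p,\beta_p w,c_p)$ law, convergent for $s<\beta_p w$; this is exactly where the constraint $s<\beta_p$ re-enters, the binding case being $w=1$. Applying the real analogue of Lemma 2.5 of K\"{u}chler and Tappe\mycite{KT2013} to the inner integral, and then the substitution $\beta_p w=z$, produces the bracket $\tilde I\bigl(s,\alpha_p,\beta_p,\beta_p/a\bigr)+\log a$ together with the prefactor $-\frac{c_p\beta_p^{\alpha_p}\Gamma(1-\alpha_p)}{\alpha_p\,b}$, matching \refeqq{eq:mgf:ou:cts}.

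Finally, the closed form \refeqq{eq:I_tilde} of $\tilde I$ follows from the same antiderivative identity used in the Remark to pass from \refeqq{eq:J_W} to \refeqq{eq:J_W_2}, now specialised to a real argument: the substitution $iu\mapsto s$ sends $-\tfrac{i}{\alpha u}$ to $\tfrac{1}{\alpha s}$, each $(\beta_j-iu)^{\alpha+1}$ to $(\beta_j-s)^{\alpha+1}$, and each $-\tfrac{i\beta_j}{u}$ to $\tfrac{\beta_j}{s}$. The main obstacle I anticipate is not the algebra but the rigorous justification of this continuation: the power functions $z\mapsto(z-s)^{\alpha}$ and the hypergeometric series $\,_2F_1$ carry branch cuts, so I would verify that for real $s<\beta_p$ and $z\in[\beta_p,\beta_p/a]$ the base $z-s$ stays in the cut plane and that the principal branches fixed in the \chf\ case continue to the correct real branches. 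The clean way to do this is to observe that both sides of \refeqq{eq:mgf:ou:cts} are holomorphic in the strip $\{\operatorname{Re}s<\beta_p\}$ and agree on the imaginary axis $s=iu$ by Proposition \ref{prop:chf:sub}, whence they coincide throughout by the identity theorem.
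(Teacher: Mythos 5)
Your proposal is correct and follows essentially the same route as the paper, which treats the corollary as an immediate consequence of Proposition \ref{prop:chf:sub} via the substitution $\psi_{Z_p}(-is,t)=m_{Z_p}(s,t)$, using the transformed representation \refeqq{eq:J_W_2} rather than \refeqq{eq:J_W} precisely to avoid the spurious complex values you worry about; your existence argument and identity-theorem justification merely make explicit what the paper leaves implicit. Incidentally, your consistent continuation $-i\beta_j/u\mapsto\beta_j/s$ in \emph{both} hypergeometric arguments is the right one, the minus sign in the second argument of \refeqq{eq:I_tilde} being a typo.
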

Note that setting $\psi_{Z_p}(-i\,s, t) = m_{Z_p}(s, t)$ in\refeqq{eq:J_W} one may claim that the \cgf\ assumes complex values which is obviously wrong and it explains why we have preferred to rely on\refeqq{eq:J_W_2} to write\refeqq{eq:I_tilde}.

In virtue of\refeqq{eq:ou:cts:lch}, Proposition\myref{prop:chf:sub} and Corollary\myref{corr:mgf:sub} can be easily extended to cope with $Z(t)$ defined for OU-BCTS processes. 
\begin{cor}\label{corr:ou:bcts:lch}
For a OU-BCTS process, the \lch\ $\psi_{Z}(u, t)$, $u\in\R$ can be represented as:
\begin{eqnarray}
\psi_Z(u, t) 
&=& -\frac{c_p\,\beta_p^{\alpha_p}\,\Gamma(1-\alpha_p)}{\alpha_p\,b}\left[I\left(u, \alpha_p, \beta_p,\frac{\beta_p}{a}\right) + \log a\right] - \\
&& \frac{c\,\beta_n^{\alpha_n}\,\Gamma(1-\alpha_n)}{\alpha_n\,b}\left[I\left(-u, \alpha_n, \beta_n, \frac{\beta_n}{a}\right) + \log a\right]\label{eq:chf:ou:bcts}
\end{eqnarray}
\end{cor}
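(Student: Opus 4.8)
The plan is to combine the additive structure of the \lch\ recorded in\refeqq{eq:ou:cts:lch} with the single-subordinator formula already established in Proposition\myref{prop:chf:sub}. First I would recall that, for a finite-variation OU-BCTS process, the noise term decomposes as $Z(t)=Z_p(t)-Z_n(t)$ as in\refeqq{eq:sub}, where $Z_p(\cdot)$ and $Z_n(\cdot)$ are the two independent OU-CTS processes driven by the subordinators $L_p(\cdot)$ and $L_n(\cdot)$. Because $Z_p(t)$ and $Z_n(t)$ are independent, the \lch\ of their difference factorises additively, $\psi_Z(u,t)=\psi_{Z_p}(u,t)+\psi_{Z_n}(-u,t)$, which is precisely\refeqq{eq:ou:cts:lch}; the reflection $u\mapsto -u$ in the second summand encodes the sign flip coming from the $-Z_n(t)$ term.

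Next I would apply Proposition\myref{prop:chf:sub} to each summand in turn. For $Z_p$ the proposition yields the first bracketed term directly with the parameters $(\alpha_p,\beta_p,c_p)$. For $Z_n$ I would note that $L_n(\cdot)$ is itself a CTS subordinator, with \Levy\ density $\nu_n(-x)$ as recalled in Section\myref{sec:preliminaries}, namely one with parameters $(\alpha_n,\beta_n,c_n)$. Consequently Proposition\myref{prop:chf:sub} applies verbatim after replacing $(\alpha_p,\beta_p,c_p)$ by $(\alpha_n,\beta_n,c_n)$, giving an explicit expression for $\psi_{Z_n}(u,t)$ in terms of the same function $I$. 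Substituting $-u$ and adding the two contributions then reproduces\refeqq{eq:chf:ou:bcts}.

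There is no genuine analytic obstacle here: the statement is essentially a bookkeeping corollary of Proposition\myref{prop:chf:sub}, and the hypergeometric representation of $I$ in\refeqq{eq:J_W} can be reused without modification. The only points that require care are the consistent handling of the sign conventions, namely tracking the reflection $u\mapsto -u$ in the negative part, and verifying that $L_n(\cdot)$ really carries the CTS parameters $(\alpha_n,\beta_n,c_n)$ so that the single-subordinator formula transfers unchanged. I would also flag that the symbol $c$ appearing in the second bracket of\refeqq{eq:chf:ou:bcts} should be read as $c_n$, consistent with the parameters of $L_n(\cdot)$.
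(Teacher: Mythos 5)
Your argument is correct and coincides with the paper's own (the corollary is stated there as an immediate consequence of\refeqq{eq:ou:cts:lch} and Proposition\myref{prop:chf:sub}, exactly the additive decomposition $\psi_Z(u,t)=\psi_{Z_p}(u,t)+\psi_{Z_n}(-u,t)$ you use). Your remark that the symbol $c$ in the second bracket should read $c_n$ is also right; it is a typo in the statement.
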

Accordingly,
\begin{cor}
The \cgf\ at time $t$ exists for $-\beta_n<s<\beta_p$ and is:
\begin{eqnarray}
 m_Z(s,t) &=& -\frac{c\,\beta_p^{\alpha_p}\,\Gamma(1-\alpha_p)}{\alpha_p\,b}\left[\tilde{I}\left(s, \alpha_p, \beta_p, \frac{\beta_p}{a}\right) + \log a\right] - \nonumber \\
&& \frac{c\,\beta_n^{\alpha_n}\,\Gamma(1-\alpha_n)}{\alpha_n\,b}\left[\tilde{I}\left(-s, \alpha_n, \beta_n,\frac{\beta_n}{a}\right)  + \log a\right]
\label{eq:mgf:ou:bcts}
\end{eqnarray}
\end{cor}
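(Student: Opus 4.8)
The plan is to derive the \cgf\ directly from the moment generating function rather than by any analytic continuation of the \lch: as the text already warns right after Corollary~\myref{corr:mgf:sub}, the naive substitution $u\mapsto -is$ in\refeqq{eq:chf:ou:bcts} produces spurious complex values. The key structural input is the additive decomposition $Z(t)=Z_p(t)-Z_n(t)$ recorded in\refeqq{eq:sub}, together with the fact that $Z_p(\cdot)$ and $Z_n(\cdot)$ are independent: each $Z_d(t)=\int_0^t e^{-b(t-s)}dL_d(s)$ is a measurable functional of the corresponding subordinator $L_d(\cdot)$, and $L_p(\cdot),L_n(\cdot)$ are independent by construction. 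Hence, for every real $s$ in the common region of integrability,
\begin{equation*}
\EXP{e^{s\,Z(t)}}=\EXP{e^{s\,Z_p(t)}}\,\EXP{e^{-s\,Z_n(t)}},
\end{equation*}
which upon taking logarithms gives $m_Z(s,t)=m_{Z_p}(s,t)+m_{Z_n}(-s,t)$. This is the \cgf\ analogue of the \lch\ splitting\refeqq{eq:ou:cts:lch}.

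Next I would apply Corollary~\myref{corr:mgf:sub} twice. The first summand $m_{Z_p}(s,t)$ is exactly\refeqq{eq:mgf:ou:cts} with the triple $(\alpha_p,\beta_p,c_p)$, valid for $s<\beta_p$, producing the $\tilde I(s,\alpha_p,\beta_p,\beta_p/a)$ term of\refeqq{eq:mgf:ou:bcts}. For the second summand, the same corollary applied to the subordinator carrying the triple $(\alpha_n,\beta_n,c_n)$ yields $m_{Z_n}(s',t)$ for $s'<\beta_n$; substituting $s'=-s$ reproduces precisely the $\tilde I(-s,\alpha_n,\beta_n,\beta_n/a)$ contribution, now valid for $-s<\beta_n$, i.e. $s>-\beta_n$. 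Assembling the two pieces gives the displayed formula, and intersecting the two one-sided domains yields the stated strip $-\beta_n<s<\beta_p$.

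I expect no genuinely hard computation here: the algebraic content is entirely inherited from Corollary~\myref{corr:mgf:sub}, and the bilateral formula is obtained by a single sign flip in the argument together with a relabelling of parameters. The one point that demands care is the bookkeeping of the region of convergence. One must check that the exponential right tail of $Z_p(t)$ is controlled at rate $\beta_p$ and the exponential left tail of $-Z_n(t)$ at rate $\beta_n$ — these being inherited from the tempering rates $\beta_p,\beta_n$ of the \Levy\ densities in\refeqq{eq:cts:density} — so that the product of the two one-sided moment generating functions is finite exactly on $(-\beta_n,\beta_p)$. This also confirms that the resulting expression is a bona fide real-valued \cgf\ on that strip, rather than the illegitimate analytic continuation cautioned against in the remark following Corollary~\myref{corr:mgf:sub}.
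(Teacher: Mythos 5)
Your proposal is correct and follows essentially the same route the paper intends: the paper gives no explicit proof but derives the bilateral formula "in virtue of" the splitting $\psi_Z(u,t)=\psi_{Z_p}(u,t)+\psi_{Z_n}(-u,t)$ from\refeqq{eq:ou:cts:lch}, i.e.\ exactly your independence-plus-sign-flip application of Corollary\myref{corr:mgf:sub} to each subordinator piece. Your additional bookkeeping of the convergence strip $-\beta_n<s<\beta_p$ is a welcome (if routine) elaboration of what the paper leaves implicit.
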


\begin{remark}
In contrast to Proposition\myref{prop:ou:ts} that is valid under the condition $0<\alpha_p<1$, we will show in the next subsection that Proposition\myref{prop:chf:sub} and consequently all corollaries are also valid for $\alpha_p<0$ and $\alpha_n<0$ as well.   
\end{remark}



\subsection{Finite activity}\label{subsec:fin:act}
When $\alpha_p <0$ the BDLP of a CTS process turns out to be a compound Poisson process, indeed the integral of the \Levy\ density is convergent. In more detail we have:
\begin{equation}
\int_0^{\infty}\nu_{L_p}(x)dx = c_p\int_0^{\infty}x^{-\alpha_p-1}e^{-\beta_p\,x}dx = 
c_p\,\Gamma(-\alpha_p)\,\beta_p^{-\alpha_p} = \lambda_p \label{eq:ts:finite:lambda}
\end{equation}
where now $-\alpha$ is positive. It results then that
\begin{equation}
L_p(t) = \sum_{k=0}^{N_p(t)}J_k, \quad J_0=0, \Pqo,
\label{eq:ts:finite}
\end{equation}
where $N_p(t)$ is a Poisson process with the intensity $\lambda_p$ defined in\refeqq{eq:ts:finite:lambda} and jumps sizes $J_k$ independent on $N_p(t)$ distributed according to a gamma law with shape parameter $-\alpha_p$ and rate parameter $\beta_p$ indeed the \pdf\ of each copy of $J_k$ is 
\begin{equation*}
f_J(x) = \frac{\nu_L(x)}{\lambda} = \frac{\beta_p^{-\alpha_p}\,x^{-\alpha_p - 1\, e^{-\beta_p\,x}}}{\Gamma(-\alpha_p)}, \quad x>0.
\end{equation*}
This last representation is consistent with J{\o}rgensen\mycite{Jorgensen97} that observed that a compound Poisson process with gamma distributed jumps follows a Tweedie distribution that is actually a CTS law.

\begin{prop}\label{prop:finite:ts}
For $\alpha_p < 0 $, and at every $t>0$, the pathwise
solution of an OU-CTS
equation\refeqq{eq:sol:OU} with $X(0)=X_0,\;\Pqo$ is in
distribution the sum of two independent \rv's
\begin{equation}
X(t)=aX_0 + Z_p(t)\, \eqd\, aX_0+X_1 \qquad\quad a=e^{-b\,
t}\label{eq:prop:outs_2}
\end{equation}
where $X_1$ can be written as $\sum_{k=0}^{N_p(t)}\tilde{J}_k$. $N_p(t)$ is a Poisson process with intensity $\lambda_p$ given by equation\refeqq{eq:ts:finite:lambda} and $\tilde{J}_k, k>0$ are \iid\ jumps distributed according to a mixture of gamma law and a uniform distribution with \pdf
\begin{equation}
\int_0^1 \frac{\left(\beta_p\,e^{b\,v\,t}\right)^{-\alpha_p}\,x^{-\alpha_p - 1}e^{-\beta_p\,e^{b\,v\,t}\,x}}{\Gamma(-\alpha_p)}dv
\label{eq:gamma:mix}
\end{equation}
or equivalently $\tilde{J_k}\sim\Gamma(-\alpha_p, \beta_p\, e^{b\,U_k t}), k>0$ where $U_k\sim\unif(0,1)$
\end{prop}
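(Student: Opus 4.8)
The plan is to exploit the compound-Poisson structure of the driving subordinator directly at the pathwise level, rather than to invert a characteristic function. Since $\alpha_p<0$, equation\refeqq{eq:ts:finite} tells us that $L_p(\cdot)$ is a compound Poisson process: its jumps arrive according to a Poisson process $N_p$ of intensity $\lambda_p$ given in\refeqq{eq:ts:finite:lambda}, and the successive jump sizes $J_k$ are \iid\ with a $\Gamma(-\alpha_p,\beta_p)$ law. My first step would be to substitute this representation into the definition of the stationary part $Z_p(t)=\int_0^t e^{-b(t-s)}\,dL_p(s)$ from\refeqq{eq:sub}. Because $L_p$ is pure jump, the stochastic integral collapses to the finite sum $Z_p(t)=\sum_{\tau_k\le t} e^{-b(t-\tau_k)}J_k$ over the jump epochs $\tau_k\le t$ of $N_p$.

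The crucial observation, which I would record next, is that the exponential kernel rescales each jump but neither creates nor destroys jumps. Hence the number of terms in $Z_p(t)$ is exactly $N_p(t)$, a Poisson \rv\ with mean $\lambda_p t$; this already yields the claimed Poisson process of intensity $\lambda_p$ in\refeqq{eq:prop:outs_2}. I would then invoke the standard order-statistics property of the Poisson process: conditionally on $N_p(t)=n$, the epochs $\tau_1,\dots,\tau_n$ are distributed as $n$ \iid\ points uniform on $[0,t]$. Consequently, up to an irrelevant reordering, $Z_p(t)$ is in distribution a sum of $N_p(t)$ \iid\ contributions, each of the form $e^{-b(t-\tau)}J$ with $\tau\sim\unif([0,t])$ and $J\sim\Gamma(-\alpha_p,\beta_p)$ independent.

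It then remains to identify the law of a single contribution. Setting $U=1-\tau/t$, which is again uniform on $[0,1]$, the scaling factor becomes $e^{-b(t-\tau)}=e^{-b\,U\,t}\in[e^{-bt},1]$. Since multiplying a $\Gamma(-\alpha_p,\beta_p)$ variable by a positive constant $e^{-bUt}$ leaves the shape unchanged and divides the rate by that constant, conditionally on $U$ one obtains $\tilde J\sim\Gamma(-\alpha_p,\beta_p e^{b U t})$, exactly the equivalent description in the statement. Mixing over $U\sim\unif([0,1])$, i.e.\ integrating the conditional gamma density against $dv$ on $[0,1]$, reproduces the \pdf\ in\refeqq{eq:gamma:mix}, and adding the deterministic term $aX_0$ from\refeqq{eq:sol:OU} gives the decomposition\refeqq{eq:prop:outs_2}.

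I do not expect any single step to be genuinely hard: the argument is essentially a bookkeeping of the Poisson order-statistics representation together with the elementary scaling of the gamma family. The one point that deserves care is the justification that the pathwise sum and the conditional-uniform representation are valid, and in particular that exchanging the mixing integral with the formation of the compound sum is legitimate; this is routine given finite activity. As an independent check, one could instead compute the \lch\ of $Z_p$ for $\alpha_p<0$ from Proposition\myref{prop:chf:sub} (valid in this regime by the Remark) and verify that it is the \lch\ of a compound Poisson law with intensity $\lambda_p$ and jump density\refeqq{eq:gamma:mix}; but the pathwise route above is both shorter and more transparent.
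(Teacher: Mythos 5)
Your proposal is correct and follows essentially the same route as the paper: both reduce $Z_p(t)$ to the pathwise compound-Poisson sum $\sum_{k} J_k e^{-b(t-\tau_k)}$, replace the jump epochs by \iid\ uniforms (the paper cites Lawrance's result, which is exactly the Poisson order-statistics property you invoke), and conclude by the scaling property of the gamma family. No gap to report.
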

\begin{proof}
According to the definition of the OU-CTS process for $\alpha_p<0$ and the representation\refeqq{eq:ts:finite} we can write
\begin{equation*}
X(t) = a\, X_0 + \sum_{k=0}^{N_p(t)}J_k\,e^{-b(t-\tau_k)}
\end{equation*}
where $\tau_k$ are the jump times of the Poisson process $N_p(t)$ with intensity $\lambda_p$.
On the other hand, as observed by
Lawrance\mycite{L80} in the context of Poisson point processes, 
for every $t>0$ we have
\begin{equation*}
\sum_{k=0}^{N_p(t)}J_k e^{-b(t-\tau_k)}\;\eqd\; \sum_{k=0}^{N_p(t)}J_k
e^{-b\,t\,U_k}, \quad U_0=0,  \Pqo \label{eq:law}
\end{equation*}
irrespective of the law of $J_k$,
where $U_k\sim\unif([0,1]), k>0$ form a sequence of \iid\ uniformly  distributed \rv's in $[0, 1]$, also independent on $J_k$. Knowing that for any gamma distributed random variable $Y\sim\Gamma(\alpha, \beta)$, and $A>0$, $A\, Y\sim\Gamma\left(\alpha, \frac{\beta}{A}\right)$, it results that $\tilde{J}_k\sim\Gamma(-\alpha_p, \beta_p\,e^{b\, U_k}), k>0$ that concludes the proof.
\end{proof}
Because with $\alpha_p<0$, $Z_p(t)$ is distributed according to a compound Poisson \rv\  its \lch\ is
\begin{eqnarray*}
\psi_{Z_p}(u, t) &=& \lambda_p\,t\left(\int_0^1\left(\frac{\beta_p\,e^{b\,v\,t}}{\beta_p\,e^{b\,v\,t} -i\,u}\right)^{-\alpha_p}dv -1\right)=\\
&=& 
\frac{\lambda_p}{b}
\left(\int_{\beta_p}^{\beta_p\,e^{b\,t}}
z^{-\alpha_p-1}(z-i\,u)^{\alpha_p}  dz -b\,t\right)
\end{eqnarray*}
where in the last step we used the change of variables $z=\beta_p\,e^{b\,v\,t}$. Replacing $a=e^{-b\,t}$ and $\lambda_p$ in\refeqq{eq:ts:finite:lambda} we get the same representation of $\psi_{Z_p}(u,t)$ as that in Proposition\myref{prop:chf:sub}; of course all corollaries of subsection\myref{subsec:ou:ts:infinite}  follow accordingly.

\section{Simulation Algorithms}\label{sec:ou:ts:cgmy:alg}
The sequential generation of the skeleton of an OU-BCTS or an OU-CGMY process on
a time grid $t_0, t_1, \dots, t_I$ simply consists in implementing the following recursive procedure with
initial condition $X(t_0)=x_0$ taking $a_i=e^{-b(t_{i}-t_{i-1})},\; i=1,\dots, I$:
\begin{equation}
X(t_{i}) = a_i X(t_{i-1}) + Z_{a_i}, \quad\qquad i=1,\dots, I.
\label{eq:ou:recursion}
\end{equation}
Cufaro Petroni and Sabino\mycite{cs20_3} have already discussed algorithms tailored for OU-CTS processes of infinite activity, the extension to bilateral OU-BCTS or OU-CGMY processes is straightforward. 

In this section we illustrate the simulation procedure when these processes are of finite activity that to our knowledge has not been investigated so far. To this end, the simulation steps to generate the skeleton of a OU-CTS process with parameters $b, \alpha_p, \beta_p, c_p$ is summarized in Algorithm\myref{alg:ou:cts}

\begin{algorithm}
\caption{ }\label{alg:ou:cts}
\begin{algorithmic}[1]
        \State $X_0\gets x$
        \For{ $i=1, \dots, I$}
        \State $\Delta t_i=t_i - t_{i-1}$, $a \gets e^{-b\Delta t_i}$
        \State $n\gets N\sim\poiss(\lambda_p\,\Delta t_i)$,
        \Comment{Generate an independent Poisson \rv\ with $\lambda$ in\refeqq{eq:ts:finite:lambda}}
        \State $u_m\gets U_m\sim\unif(0,1), m=1, \dots, n$
        \Comment{Generate $n$ \iid\ uniform \rv's}
        \State $\tilde{\beta}_m \gets \beta_p\, e^{b\,u_m}, m=1, \dots, n$
        \State $\tilde{j_m}\gets \tilde{J_m}\sim\gam(1-\alpha,\, \tilde{\beta}_m), m=1, \dots, n$
        \Comment{Generate $n$ independent gamma \rv's with scale $-\alpha_p$ and random rates $\tilde{\beta}_m$}
        \State $x_1\gets \sum_{m=1}^n\tilde{j_m}$
        \State $X(t_i)\gets a\,X(t_{i-1}) + x_1$.
        \EndFor
        \end{algorithmic}
\end{algorithm}
We remark that when $-\alpha_p=1$ the BDLP of the OU-CTS process is a compound
Poisson process with exponential jumps that corresponds to a OU process with a gamma stationary law. For this configuration it is preferable to use the faster and more efficient algorithm detailed in Sabino and Cufaro Petroni\mycite{cs20_2, cs20_1}.

Finally, the procedure the generate the skeleton of OU-BCTS and OU-CGMY processes simply entails to repeat steps $5$ to $8$ two times and add their outcome to step $9$. 


\subsection{Numerical Experiments}\label{sub:sec:ou:ts:cgmy:num}
In this section, we will assess the performance and the
effectiveness of the algorithms for the simulation of OU-BCTS process. All the simulation experiments in the present paper
have been conducted using \emph{Python} with a $64$-bit Intel Core
i5-6300U CPU, 8GB. The performance of the algorithms is ranked in
terms of the percentage error relatively to the first four cumulants
denoted \emph{err} \% and defined as
\begin{equation*}
   \text{err \%} = \frac{\text{true value} - \text{estimated
value}}{\text{true value}}
\end{equation*}
Taking advantage of\refeqq{eq:lch:ou} one can calculate the cumulants 
$c_{X,k}(x_0,t),\;k= 1,2,\ldots$ of $X(t)$ for $X_0=x_0$ from the
cumulants $c_{L,k}$ of the BCTS law according to
\begin{eqnarray}
c_{X,1}(x_0,t) &=&\EXP{X(t)|X_0=x_0}\;=\; x_0e^{-b\, t} + \frac{c_{L,1}}{b}\left(1-e^{-b\, t}\right),\qquad k=1\label{eq:cumulants:ou1}\\
c_{X,k}(x_0,t) &=& \frac{c_{L,k}}{k\,b}\left(1-e^{-k\,b\,
t}\right), \qquad\qquad\qquad\qquad\qquad\qquad
k=2,3,\ldots\label{eq:cumulants:ou2}
\end{eqnarray}
where
\begin{equation}\label{eq:cts:cumulants}
c_{L,k} =\int_{-\infty}^{+\infty}x^k\nu_L(x)\,dx=
c_p\,\beta_p^{\alpha_p-k}\Gamma(k-\alpha_p) + (-1)^k c_n\,\beta_n^{\alpha_n-k}\Gamma(k-\alpha_n)
\end{equation}
In our numerical experiments we consider a OU-CTS process with
parameters $\left(b, \beta_p, c_p \right) = \left(0.5, 1.5, 0.3\right)$ and a OU-CGMY process with $\left(b, C, G, M\right)=(0.5, 0.3, 0.5, 1.5)$
with $\alpha_p=Y\in\{-0.5, -1.5, -2.5, -3.5\}$. 

The Tables\myref{tab:ou:cts:1:12} and\myref{tab:ou:bcts:1:2}
compare then the true values of the first four cumulants
$c_{X,k}(0,\Delta t)$ with their corresponding estimates from $10^6$
Monte Carlo (MC) simulations respectively for the OU-CTS process with $\Delta t=1/12$ and for the OU-CGMY process with $\Delta
t=1/2$, each of the two with the aforementioned parameters. We can conclude therefrom that the proposed
Algorithm\myref{alg:ou:cts} and its adaptation to the bilateral case produce unbiased cumulants that are very
close to their theoretical values. For the sake of brevity, we do
not report the additional results obtained with different parameter
settings that anyhow bring us to the same findings. Overall, from
the numerical results reported in this section, it is evident that
the Algorithm\myref{alg:ou:cts} proposed above can achieve a very
high level of accuracy as well as a conspicuous efficiency.

\begin{table}[ht!]
    \centering\scriptsize
        \resizebox{\textwidth}{!}{
        \begin{tabular}{*{13}{|c|rrr|rrr|rrr|rrr}}
                    \hline
&       \multicolumn{3}{c|}{$c_{X, 1}(0,\Delta t)$} & \multicolumn{3}{c|}{$c_{X, 2}(0,\Delta t)$} & \multicolumn{3}{c|}{$c_{X, 3}(0,\Delta t)$} & \multicolumn{3}{c|}{$c_{X, 4}(0,\Delta t)$} \\
                        \hline
                    \hline
                    $\alpha_p$ & true & MC & err \% & true & MC & err \% & true & MC & err \% & true & MC & err \% \\
                    \hline
$-0.5$ & $1.181$ & $1.199$ & $-1.5$ & $1.157$ & $1.182$ & $-2.2$ & $1.889$ & $1.910$ & $-1.1$ & $4.320$ & $4.504$ & $-4.3$\\
$-1.5$ & $1.181$ & $1.205$ & $-2.0$ & $1.929$ & $2.020$ & $-4.7$ & $4.409$ & $4.616$ & $-4.7$ & $12.960$ & $13.422$ & $-3.6$\\
$-2.5$ & $1.969$ & $1.929$ & $2.0$ & $4.500$ & $4.378$ & $2.7$ & $13.226$ & $12.636$ & $4.5$ & $47.520$ & $45.830$ & $3.6$\\
$-3.5$ & $4.594$ & $4.539$ & $1.2$ & $13.500$ & $13.335$ & $1.2$ & $48.496$ & $48.099$ & $0.8$ & $205.921$ & $206.598$ & $-0.3$\\
                    \hline
        \end{tabular}
        }
    \scriptsize
    \caption{\footnotesize{Comparing the first four true cumulants
    with their corresponding MC-estimated values (multiplied by $100$)
obtained with $10^6$ simulations and $\Delta t=1/12$, $(b, \beta_p, c_p)=(0.5, 1.5, 0.3)$.}}\label{tab:ou:cts:1:12}
\end{table}

\begin{table}[ht!]
    \centering\scriptsize
        \resizebox{\textwidth}{!}{
        \begin{tabular}{*{13}{|c|rrr|rrr|rrr|rrr}}
                    \hline
&       \multicolumn{3}{c|}{$c_{X, 1}(0,\Delta t)$} & \multicolumn{3}{c|}{$c_{X, 2}(0,\Delta t)$} & \multicolumn{3}{c|}{$c_{X, 3}(0,\Delta t)$} & \multicolumn{3}{c|}{$c_{X, 4}(0,\Delta t)$} \\
                        \hline
                    \hline
                    $Y$ & true & MC & err \% & true & MC & err \% & true & MC & err \% & true & MC & err \% \\
                    \hline
$-0.5$ & $-0.269$ & $-0.270$ & $-0.4$ & $0.945$ & $0.953$ & $-0.9$ & $-3.883$ & $-3.936$ & $-1.4$ & $25.13$ & $25.59$ & $-1.8$\\
$-1.5$ & $-0.934$ & $-0.935$ & $-0.1$ & $4.533$ & $4.526$ & $0.2$ & $-27.58$ & $-27.41$ & $0.6$ & $225.1$ & $221.1$ & $1.8$\\
$-2.5$ & $-4.883$ & $-4.890$ & $-0.1$ & $31.29$ & $31.35$ & $-0.2$ & $-249.4$ & $-249.3$ & $0.0$ & $2473$ & $2463$ & $0.4$\\
$-3.5$ & $-34.68$ & $-34.65$ & $0.1$ & $280.3$ & $280.2$ & $0.0$ & $-2747$ & $-2746$ & $0.1$ & $32127$ & $31636$ & $1.5$\\
                    \hline
        \end{tabular}
        }
    \scriptsize
    \caption{\footnotesize{Comparing the first four true cumulants
    with their corresponding MC-estimated values
obtained with $10^6$ simulations and $\Delta t=1/2$, $(b, C, G, M)=(0.5, 0.3, 0.5, 1.5)$.}}\label{tab:ou:bcts:1:2}
\end{table}


\section{Financial Applications}\label{sec:fin:app}

In the following subsections we illustrate the application of the results shown in Section\myref{sec:ou:ts:cgmy} and of the simulation algorithm of Section\myref{sec:ou:ts:cgmy:alg}  to the pricing of derivative contracts in energy markets using models driven by OU-BCTS and OU-CGMY processes. Energy markets and wider commodities markets exhibit mean-reversion, seasonality and spikes, this last feature is particularly difficult to  capture with a pure Gaussian framework and motivates the use of \Levy\ process. To this end, the literature is very rich of alternatives, 
for instance Cartea and Figueroa\mycite{CarteaFigueroa} assumes that the evolution of the spot prices follows a jump-diffusion OU process, 
whereas Meyer-Brandis and Tankov\mycite{KT06} investigate the use of generalized OU processes. 

Our model is similar to that of Benth et al.\mycite{BKM07} and Benth and Benth\mycite{BenthBenth04} where instead of NIG processes, we consider BCTS or CGMY processes as BDLP's. Our main goal, is to give the basis for the theoretical pricing and to provide an efficient and exact simulation procedure rather to focus on the parameter calibration and the model selection. Indeed, such dynamics based on OU-BCTS and OU-CGYM processes can also find application in other financial contexts. 

Our financial applications consider the pricing of a strip of call options with a FFT-based approach, the evaluation of a forward start Asian option with MC simulations and finally the pricing of a swing option using a modified version of version of the Least-Squares Monte Carlo (LSMC), introduced in Longstaff-Schwartz\mycite{LSW01}, detailed in Boogert and C. de Jong\mycite{BDJ08, BDJ10}.

We assume that the spot price is driven by the following one-factor process
	\begin{equation}\label{eq:market}
		S(t) = F(0,t)\, e^{h(t) + X(t)}
	\end{equation}
	where $h(t)$ is a deterministic function, $F(0,t)$ is the forward curve derived from quoted products and $X(t)$ is a OU-BCTS process. 
This market can easily be turned into a multi-factor one, for instance adding a second CTS process obtaining a tempered stable version of the two factor Gaussian model of Schwartz and Smith\mycite{SchwSchm00}. We nevertheless focus on the model\refeqq{eq:market} to better highlight the results obtained for the OU-BCTS and OU-CGMY processes.
	
Using  Lemma 3.1 in Hambly
et al.\mycite{HHM11},  the risk-neutral conditions are met when the deterministic function $h(t)$ is consistent with forward curve such that
\begin{equation}\label{eq:rn:spot}
	h(t) = -m_X(1, t)
\end{equation}
where $m_X(s, t)$ is the \cgf\ $m_X(s, t) = s\,e^{-bt} + m_Z(s, t)$ and $m_Z(s, t)$ is given by\refeqq{eq:mgf:ou:bcts}, therefore 
\begin{eqnarray}\label{eq:rn:ou:bcts}
	h(t) &=& \frac{c\,\beta_p^{\alpha_p}\,\Gamma(1-\alpha_p)}{\alpha_p\,b}\left[\tilde{I}\left(1, \alpha_p, \beta_p,\frac{\beta_p}{a}\right) + \log a\right] + \nonumber \\
&& \frac{c\,\beta_n^{\alpha_n}\,\Gamma(1-\alpha_n)}{\alpha_n\,b}\left[\tilde{I}\left(-1, \alpha_n, \beta_n, \frac{\beta_n}{a}\right) + \log a\right]
\end{eqnarray}
with $\beta_p>1$ and $\beta_n>0$.
When $\alpha_p=\alpha_n=1/2$, the integrals can be written in terms of the logarithmic function as follows
\begin{eqnarray}
\tilde{I}\left(1, \frac{1}{2}, \beta_1, \beta_2\right)&=&  \int_{\beta_1}^{\beta_2} z^{-\frac{3}{2}}(z - 1)^{\frac{1}{2}}dz = \nonumber\\
&=&2\,\beta_2^{\alpha}\,\left(\log \left(\sqrt{\beta_2} + \sqrt{\beta_2 - 1}\right) -  \sqrt{\frac{\beta_2 - 1}{\beta_2}}\right) - \nonumber\\
&&2\,\beta_1^{\alpha}\,\left(\log \left(\sqrt{\beta_1} + \sqrt{\beta_1 - 1}\right) -  \sqrt{\frac{\beta_1 - 1}{\beta_1}}\right),
\label{eq:rn:integrals_p_0.5}\\
\tilde{I}\left(-1, \frac{1}{2}, \beta_1, \beta_2\right) &=&\int_{\beta_1}^{\beta_2} z^{-\frac{3}{2}}(z + 1)^{\frac{1}{2}}dz = \nonumber\\
&=&2\,\beta_2^{\alpha}
\left(\log \left(\sqrt{\beta_2} + \sqrt{\beta_2 + 1}\right) -  \sqrt{\frac{\beta_2+1}{\beta_2}}\right)-\nonumber\\
&&2\,\beta_1^{\alpha}
\left(\log \left(\sqrt{\beta_1} + \sqrt{\beta_1 + 1}\right) -  \sqrt{\frac{\beta_1+1}{\beta_1}}\right).
\label{eq:rn:integrals_n_0.5}
\end{eqnarray}

\subsection{Call Options}\label{subsec:fin:app:call}
We consider a daily strip of $M$ call options with maturity $T$ and strike $K$ namely, a contract with payoff
\begin{equation*}
C(K, T) = \sum_{m=1}^M(S(t_m)-K)^+=\sum_{m=1}^M c_m(K, t_m), \quad t_1, t_2, \dots t_M=T.
\end{equation*}
Such a contract is commonly used for hedging purposes or for the  parameters calibration. It normally encompasses monthly, quarterly and yearly maturities but is not very liquid and  is generally offered by brokers. 

We assume that the market model\refeqq{eq:market} is driven by  a full seven-parameters OU-BCTS  process with infinite activity and finite variation. We price the strip of calls using the FFT-based technique of Carr and Madam\mycite{Carr1999OptionVU} given the \chf\ $\phi(u, t)$ of the of $\log\,S(t)=\log F(0,t) + h(t) + X(t)$
\begin{equation*}
\phi(u, t)= F(0,t)e^{i\,u\,h(t)}\varphi_X(u,t) 
= F(0,t)e^{i\,u\,\left(h(t) + a X(0)\right) + \psi_Z(u, t)}, \quad a = e^{-b\, t}
\end{equation*}
where $h(t)$ is given by\refeqq{eq:rn:ou:bcts} and $\psi_Z(u,t)$ by\refeqq{eq:chf:ou:bcts}. We refer the reader to  Carr and Madam\mycite{Carr1999OptionVU} for the details on the method.	

The calibration and the parameters estimation is not the focus of this study, instead we rather illustrate the applicability of our theoretical results taking parameters sets available in the literature. In this example, we take those of Poirot and Tankov\mycite{PoirotTankov2007} (plus $b$ and $c_n$) and let $\alpha_p$ and $\alpha_n$ vary: $(b, \beta_p, \beta_n, c_p, c_n) = (0.1, 2.5, 3.5, 0.5, 1)$; for simplicity we consider a flat forward curve with $F(0,t)=20, t>0$.  

\begin{table}[ht!]
\centering
\begin{tabular}{|c||*{5}{c|}}
\hline
\backslashbox{$\alpha_n$}{$\alpha_p$}
		&\makebox[3em]{0.1}&\makebox[3em]{0.3}&\makebox[3em]{0.5}
		&\makebox[3em]{0.7}&\makebox[3em]{0.9}\\\hline\hline
			$0.1$ & $3.504$ & $3.540$ & $3.609$ & $4.262$ & $5.770$\\
			$0.3$ & $4.865$ & $4.917$ & $5.008$ & $5.205$ & $6.290$\\
			$0.5$ & $6.690$ & $6.757$ & $6.869$ & $7.073$ & $7.560$\\
			$0.7$ & $9.058$ & $9.136$ & $9.261$ & $9.474$ & $9.879$\\
			$0.9$ & $12.108$ & $12.192$ & $12.322$ & $12.535$ & $12.907$\\
                    \hline
        \end{tabular}
    \scriptsize
    \caption{\footnotesize{Strip of $M=	30$ daily call options calculated with FFT, $T=1/12$  $(b, \beta_p, c_p)=(0.1, 1.5, 0.3)$.}}\label{tab:fft:call:alpha}
\end{table}

Table\myref{tab:fft:call:alpha} shows the values relatively to a strip of $M=30$ daily at-the-money call options with maturity $T=1/12$ with different pairs of $\alpha_p, \alpha_n$. We observe that fixing one of $\alpha_p$ or $\alpha_n$, the value of the option is increasing when the other one increases. 
Moreover, Figure\myref{fig:ou:bcts:call:strike} illustrates the variability of the option price with respect to the strike price $K$ where the dotting lines represent the values obtained with $N=10^5$ MC simulations plus and minus three times the estimation error (the root-mean squared error divided by $\sqrt{N}$). In addition, Figure\myref{fig:ou:bcts:call:maturity} compares the price of at-the-money options $C_m = c_m(K, t_m), m=1, \dots, M$, $K=20$ obtained with the FFT method to those estimated once again with $N=10^5$ MC simulations. In these last two examples we have selected $\alpha_p=\alpha_n=0.5$.

As far as the MC method is concerned, the simulation of the skeleton of the process is accomplished running the procedure explained in Cufaro Petroni and Sabino\mycite{cs20_3} based on Proposition\myref{eq:sub} two times because of the bilateral OU-BCTS; the acceptance rejection step to draw from the law of $V$ in\refeqq{eq:ou:cts:mixture:rate} assumes a piece-wise approximation of the dominating functions into $L=100$ terms.

The results calculated with the FFT-method and with the MC method are totally consistent on the other hand, it is well-know that the FFT approach is faster. Nevertheless, a side-product of the MC approach are percentiles or other statistics which are widely used by practitioners for risk-management purposes. 

\begin{figure}
        \begin{subfigure}[c]{.5\textwidth}{
                \includegraphics[width=70mm]{./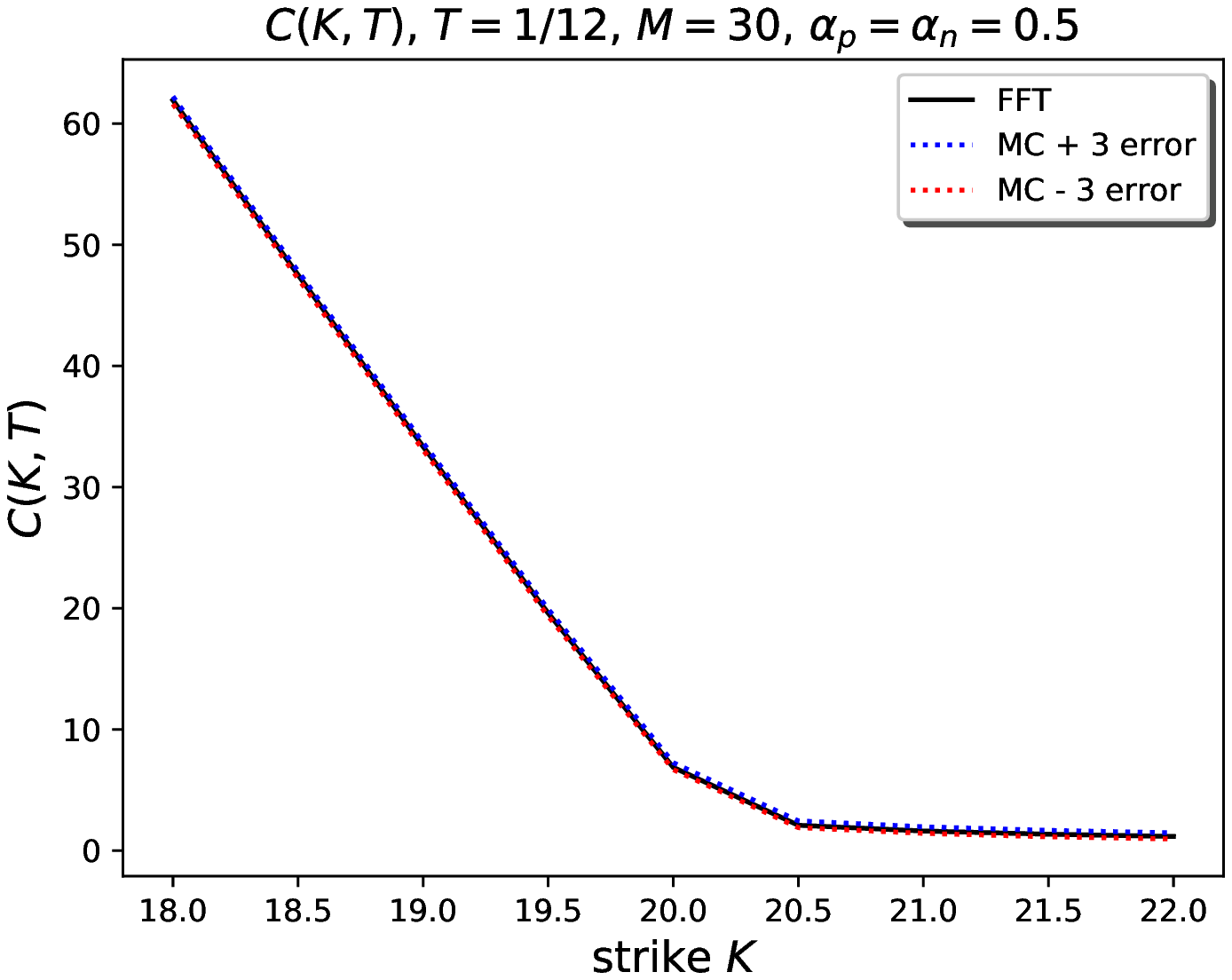}
                }
								\caption{Effect on the strike}\label{fig:ou:bcts:call:strike}
        \end{subfigure}
        \begin{subfigure}[c]{.5\textwidth}{
                \includegraphics[width=70mm]{./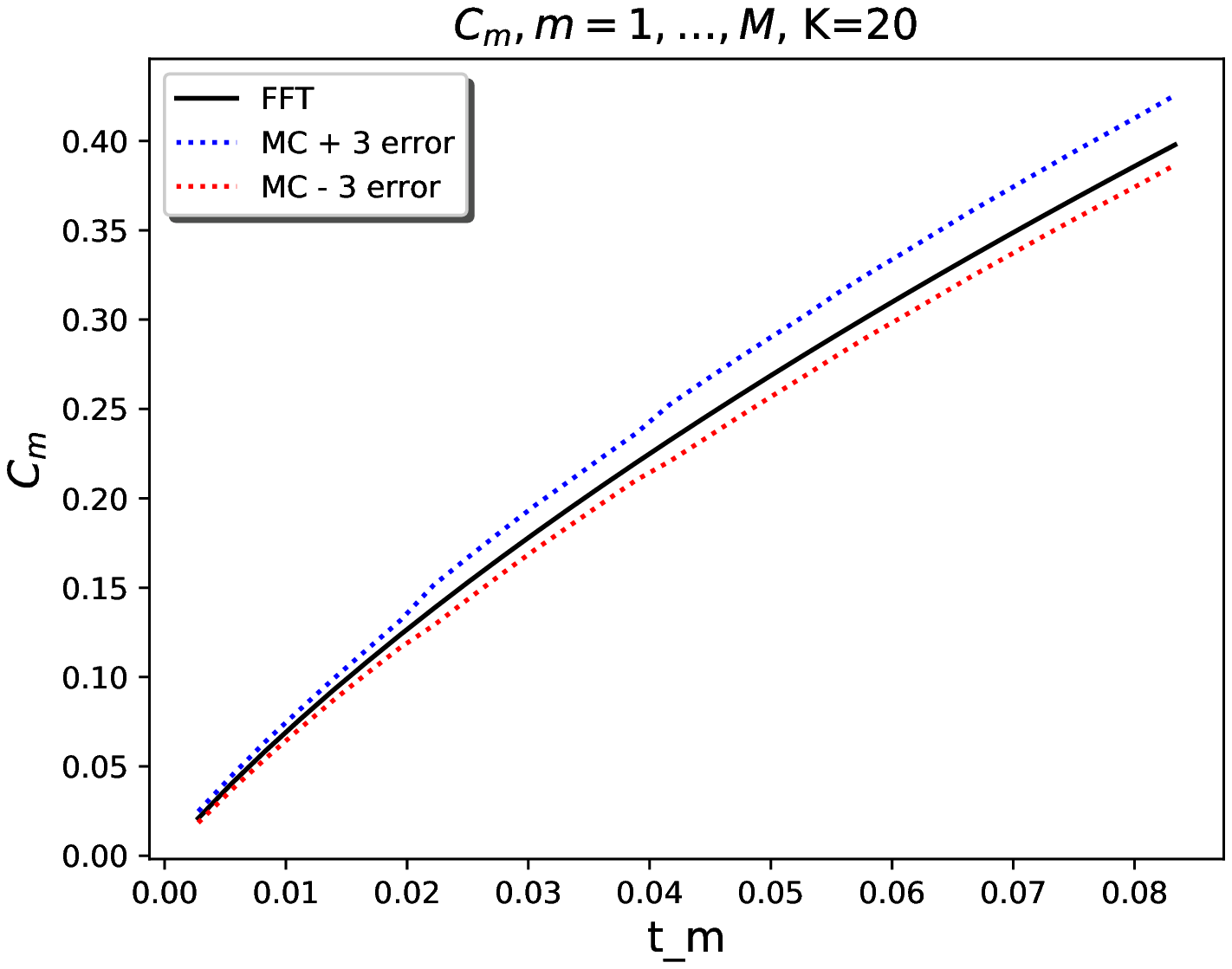}
                }
								\caption{$c_m(K, t_m), m=1, \dots, 30$}\label{fig:ou:bcts:call:maturity}
        \end{subfigure}
\caption{Call-options values calculated with FFT and MC with $N=10^5$, $(b, \beta_p, \beta_n, c_p, c_n) = (0.1, 2.5, 3.5, 0.5, 1)$, $\alpha_p=\alpha_n=0.5$, $K=20$}
\end{figure}%
%
%

\subsection{Asian Options}\label{subsec:fin:app:asian}
As a second financial application we consider the pricing of Asian options with MC simulations. In contrast to the previous example, we assume that the market dynamic is driven by a OU-CGMY process with infinite activity and finite variation with $C=c_p = c_n$, $G=\beta_n$, $M=\beta_p$ and $Y = \alpha_p=\alpha_n$.

MC methods are known to be sometimes slower than FFT techniques that can also be tailored to the pricing of Asian options (see Zhang and C. Oosterlee\mycite{ZhangOosterlee13_a}). Nevertheless, the former approach  provides a view on the distribution of the potential cash-flows of derivative contracts giving a precious information to risk managers or to trading units. 

We recall that the payoff at maturity $T$ of an Asian option with European style and strike price $K$ is
\begin{equation*}
	A(K, T) = \left(\frac{\sum_{i=1}^{I}}{I}S(t_i) - K\right)^+.
\end{equation*}

In this second example we consider once again a flat forward curve $F(0, t)=20, t>0$ and a different parameter set. We select $b=10$ and $\left(C, G, M\right)=\left(2, 15, 5\right)$ also used in Ballotta and Kyriakou\mycite{BK14} and let $Y$ vary. Figure\myref{fig:ou:cgmy:positive:trajectories} displays a sample of four trajectories with these parameters generated using the procedure of Cufaro Petroni and Sabino\mycite{cs20_3} as done in the case of the daily strips of call options.

In addition to this simulation procedure we consider here two approximations: the
first boils down to simply neglect $X_2$ in the
Proposition\myref{prop:ou:ts} and accordingly to the bilateral case (\emph{Approximation 1}); the second -- in the same vein of Benth
et al.\mycite{BDPL18} dealing with the \emph{normal inverse
Gaussian}-driven OU processes -- takes advantage of the
approximation of the law of $Z(t)$ in\refeqq{eq:sol:OU} with that of
$e^{-k\, t}L(t)$ (\emph{Approximation 2}). 

\noindent It turns our that under \emph{Approximation 1} $X_1(\Delta\,t)\sim\bcts\!\left(Y, Y, \frac{M}{a}, \frac{G}{a}, C\,\frac{1 - a^Y}{Y\, b}, C\,\frac{1 - a^Y}{Y\, b}\right)$ whereas under \emph{Approximation 2} $X_1(\Delta\,t)\sim\bcts\left(Y, Y, \frac{M}{a}, \frac{G}{a},
C\,\Delta\,t, C\,\Delta\,t\right)$, $a=e^{-b\,\Delta\,t}$ where $\Delta\,t=1/360$ because we assume daily settlements; for simplicity we adopt the convention that there are $360$ days per year.

In order to highlight the differences between the estimations returned by the exact method and those with the two alternatives, we consider two Asian options both of them with $I=90$ daily settlements, the second option however however, is a forward start contract whose first settlement date occurs after $30$ days. The MC option values and their relative errors are reported in Table\myref{tab:Asian} and Table\myref{tab:ForwardAsian} with different $Y$'s and number of simulations $N$.

Irrespective to the combination of $Y$ and $N$, for the option that start settling after one day the exact solution and \emph{Approximation 1} return very close values, whereas \emph{Approximation 1} is slightly biased. In contrast, for the forward start contract, although the time steps for $m>1$ coincide and are very small, for the simple fact that the first time step is relatively high, the estimated prices returned by the two non-exact simulation schemes  are very biased and do not offer an acceptable alternative any longer. More important, the bias cannot be controlled increasing the number of simulations as shown in Table\myref{tab:ForwardAsian}.

The cause of this difference comes from the fact that $X_2$ in Proposition\myref{prop:ou:ts} can be neglected when the time step is small (accordingly for the bilateral case). Indeed taking the Taylor expansion of the parameter $\Lambda_a$ in Proposition\myref{prop:ou:ts}
\begin{equation*}
    \Lambda_a=\frac{c\Gamma(1-\alpha)b\,\beta^{\alpha}}{2}\,
\Delta\, t^2+o\big(\Delta\,t^2\big).
\end{equation*}
As mentioned, the parameters calibration is not the focus of this study, nevertheless these observations could lead to a convenient strategy combining
parameters estimation and exact simulation of the OU-BCTS processes.
Assuming that the data could be made available with a fine enough
time-granularity (e.g. daily $\Delta\, t=1/360$), one could base the
parameters estimation on the likelihood methods by approximating the
exact transition \pdf\ of a OU-BCTS process with that of a BCTS law. In alternative, one could also use the generalized method of moments to historical data taking the cumulants from the formulas\refeqq{eq:cumulants:ou2} and\refeqq{eq:cts:cumulants}. Instead, to avoid being
forced to always simulate the OU-BCTS processes on a fine time-grid
allowing the approximations, the generation of the
skeleton of such processes will be preferably based on the exact
method.

\begin{figure}
\caption{Sample trajectories of OU-CGMY processes with $\left(b, C,
G, M\right) = \left(10, 2, 15, 5\right)$ and $Y\in\{0.3, 0.5,
0.7, 0.9\}$}\label{fig:ou:cgmy:positive:trajectories}
        \begin{subfigure}[c]{.5\textwidth}{
                \includegraphics[width=70mm]{./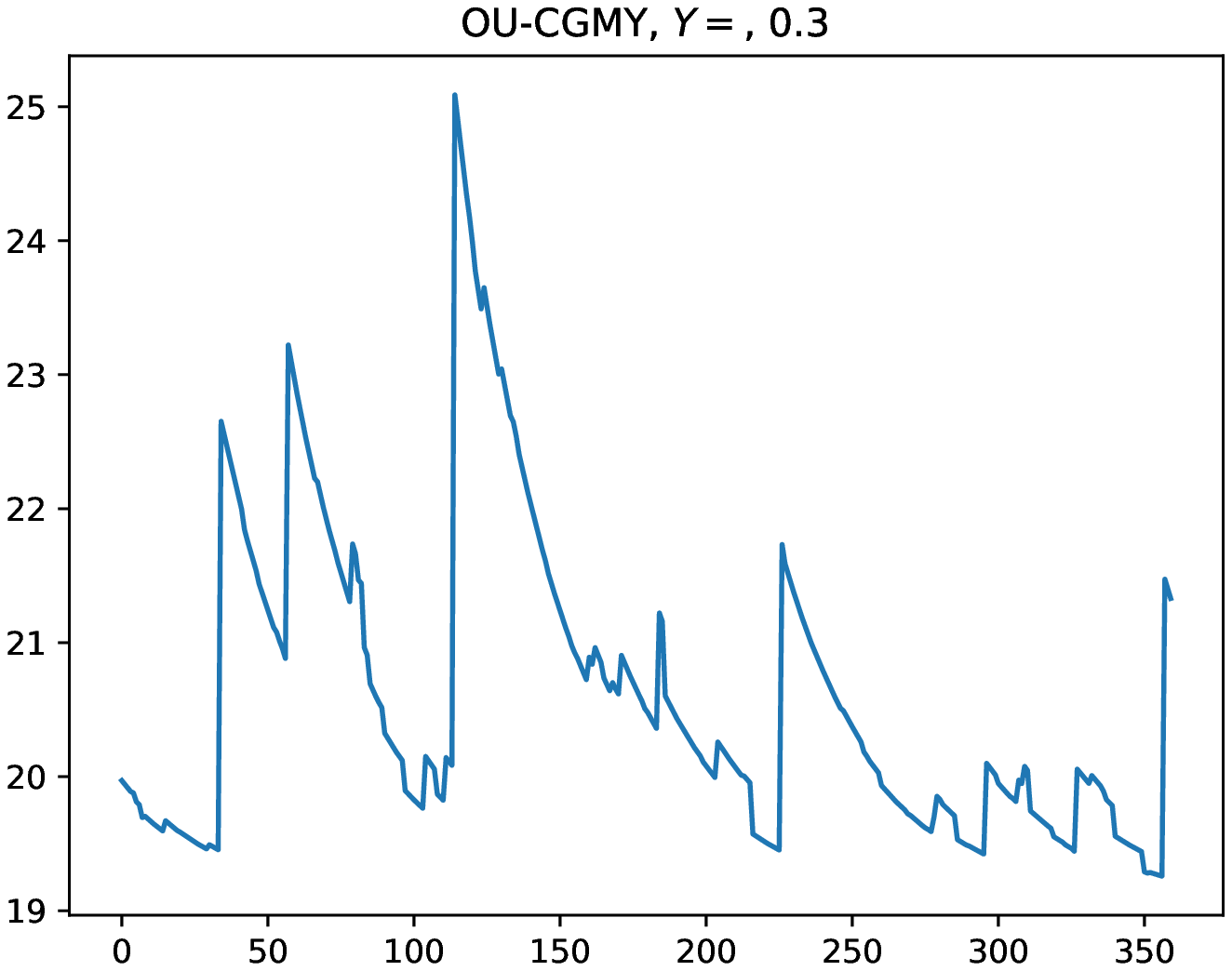}
                }
        \end{subfigure}
        \begin{subfigure}[c]{.5\textwidth}{
                \includegraphics[width=70mm]{./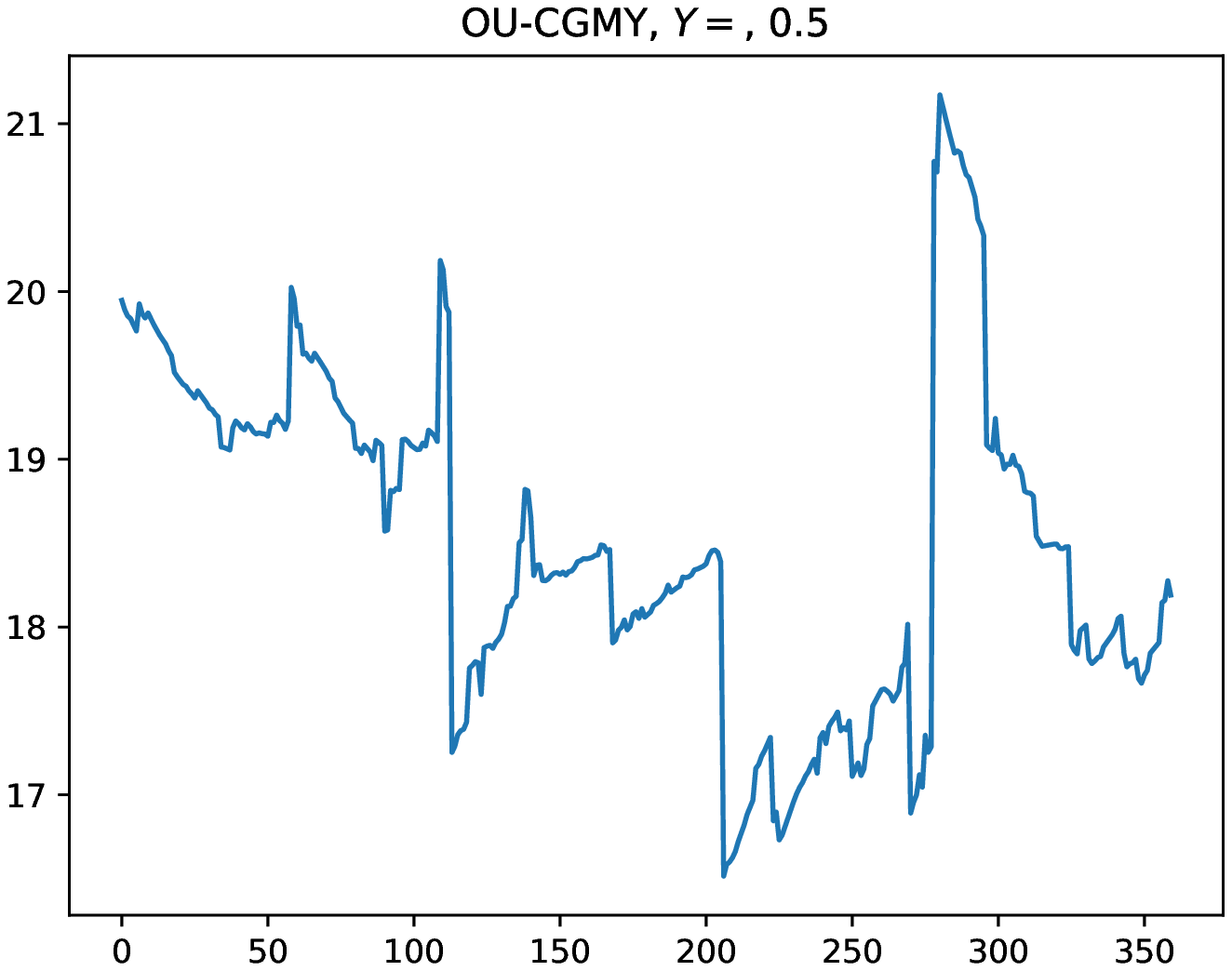}
                }
        \end{subfigure}
			\\
        \begin{subfigure}[c]{.5\textwidth}{
                \includegraphics[width=70mm]{./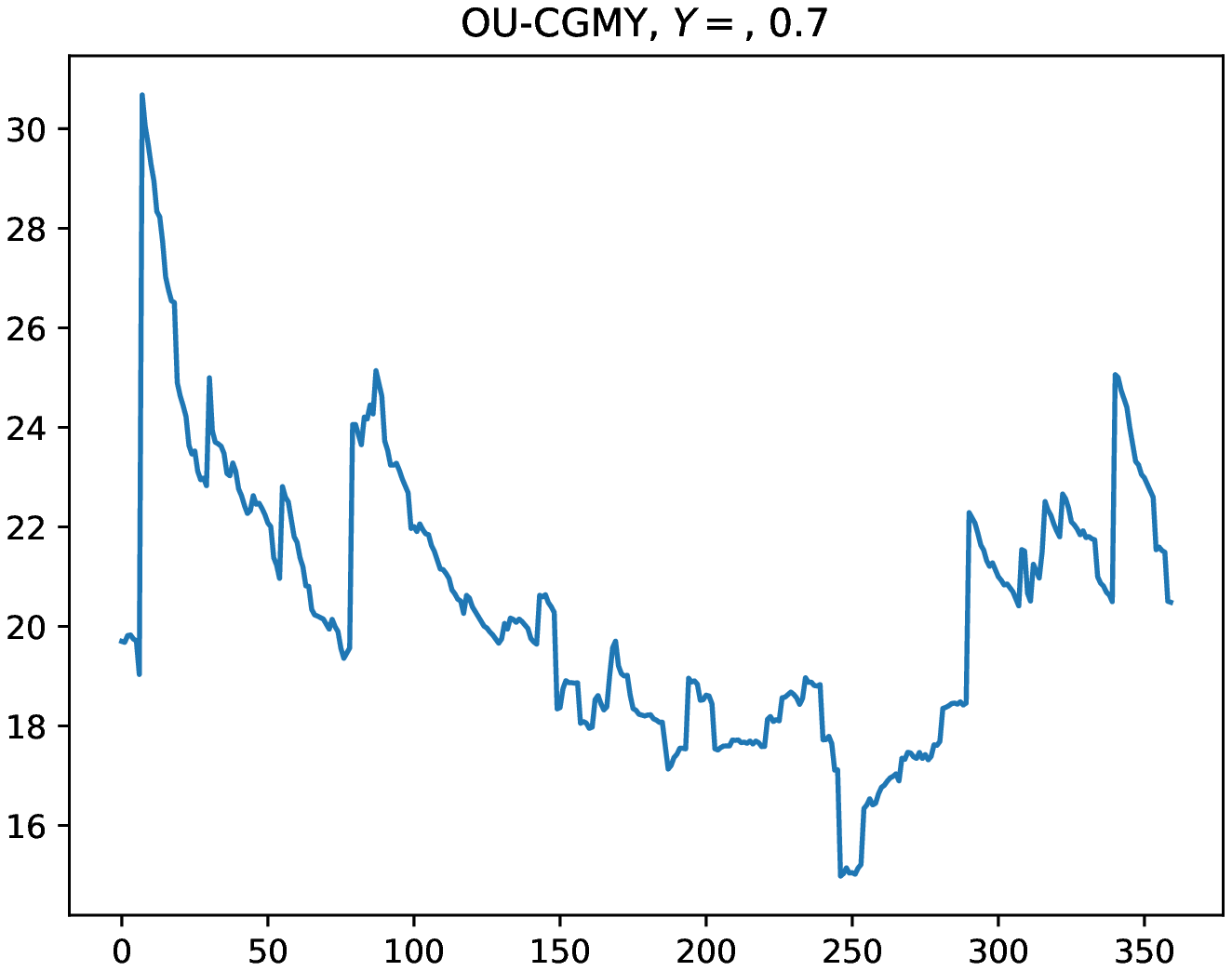}
                }
        \end{subfigure}
        \begin{subfigure}[c]{.5\textwidth}{
                \includegraphics[width=70mm]{./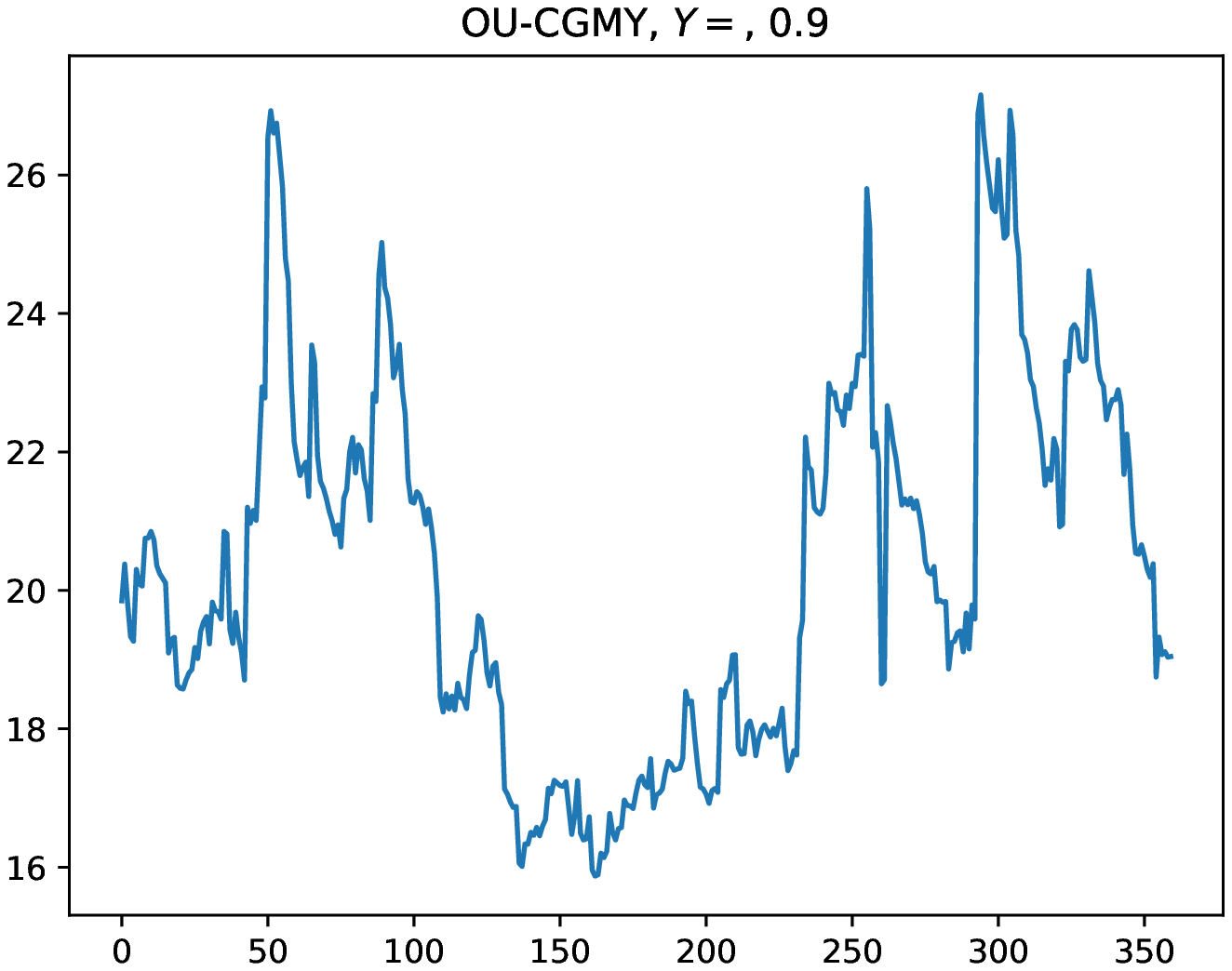}
                }
        \end{subfigure}
\end{figure}

\begin{table}[!htb]
\caption{Asian option. $K=20$, $T=1/4$}\label{tab:Asian}
\begin{subtable}{\textwidth}
		\centering\scriptsize
			\begin{tabular}{|c|c|c|c|c|c|c|c|c|c|c|c|c|}
				\hline
				   & \multicolumn{2}{|c}{Exact} & \multicolumn{2}{|c}{Approximation 1} & \multicolumn{2}{|c}{Approximation 2}
				   & \multicolumn{2}{|c}{Exact} & \multicolumn{2}{|c}{Approximation 1} & \multicolumn{2}{|c|}{Approximation 2}\\
				\hline
				$N$ & price & error & price & error & price & error & price & error & price & error & price & error\\
				\hline
				   & \multicolumn{6}{|c}{$Y=0.3$} & \multicolumn{6}{|c|}{$Y=0.5$}\\
				\hline
$1000$ & $0.3293$ & $0.0224$ & $0.3771$ & $0.0275$ & $0.3671$ & $0.0253$ & $0.4439$ & $0.0278$ & $0.4461$ & $0.0300$ & $0.4865$ & $0.0326$\\
$10000$ & $0.3894$ & $0.0087$ & $0.3709$ & $0.0085$ & $0.3655$ & $0.0085$ & $0.4699$ & $0.0098$ & $0.4673$ & $0.0096$ & $0.4499$ & $0.0096$\\
$20000$ & $0.3692$ & $0.0060$ & $0.3743$ & $0.0062$ & $0.3651$ & $0.0060$ & $0.4571$ & $0.0069$ & $0.4404$ & $0.0065$ & $0.4552$ & $0.0068$\\
$50000$ & $0.3808$ & $0.0040$ & $0.3814$ & $0.0039$ & $0.3718$ & $0.0038$ & $0.4634$ & $0.0043$ & $0.4594$ & $0.0043$ & $0.4545$ & $0.0043$\\
$100000$ & $0.3792$ & $0.0028$ & $0.3715$ & $0.0027$ & $0.3697$ & $0.0027$ & $0.4627$ & $0.0031$ & $0.4559$ & $0.0031$ & $0.4467$ & $0.0030$\\			
				\hline
				\hline
				   & \multicolumn{2}{|c}{Exact} & \multicolumn{2}{|c}{Approximation 1} & \multicolumn{2}{|c}{Approximation 2}
				   & \multicolumn{2}{|c}{Exact} & \multicolumn{2}{|c}{Approximation 1} & \multicolumn{2}{|c|}{Approximation 2}\\
				\hline
				$N$ & price & error & price & error & price & error & price & error & price & error & price & error\\
				\hline
				   & \multicolumn{6}{|c}{$Y=0.7$} & \multicolumn{6}{|c|}{$Y=0.9$}\\
				\hline
$1000$ & $0.5446$ & $0.0325$ & $0.5317$ & $0.0332$ & $0.5298$ & $0.0325$ & $0.6910$ & $0.0427$ & $0.7031$ & $0.0377$ & $0.7246$ & $0.0395$\\
$10000$ & $0.5902$ & $0.0114$ & $0.5630$ & $0.0110$ & $0.5545$ & $0.0106$ & $0.7102$ & $0.0130$ & $0.7227$ & $0.0134$ & $0.6912$ & $0.0129$\\
$20000$ & $0.5851$ & $0.0080$ & $0.5631$ & $0.0075$ & $0.5544$ & $0.0077$ & $0.7147$ & $0.0091$ & $0.7066$ & $0.0091$ & $0.6973$ & $0.0090$\\
$50000$ & $0.5647$ & $0.0049$ & $0.5631$ & $0.0048$ & $0.5413$ & $0.0047$ & $0.7209$ & $0.0058$ & $0.6933$ & $0.0057$ & $0.6920$ & $0.0057$\\
$100000$ & $0.5701$ & $0.0035$ & $0.5603$ & $0.0034$ & $0.5491$ & $0.0034$ & $0.7148$ & $0.0041$ & $0.7117$ & $0.0041$ & $0.6897$ & $0.0040$\\
				\hline				
		\end{tabular}
\end{subtable}
\end{table}
\begin{table}[!htb]
\caption{Forward start Asian option. $K=20$, $T=1/3$}\label{tab:ForwardAsian}
\begin{subtable}{\textwidth}
		\centering\scriptsize
			\begin{tabular}{|c|c|c|c|c|c|c|c|c|c|c|c|c|}
				\hline
				   & \multicolumn{2}{|c}{Exact} & \multicolumn{2}{|c}{Approximation 1} & \multicolumn{2}{|c}{Approximation 2}
				   & \multicolumn{2}{|c}{Exact} & \multicolumn{2}{|c}{Approximation 1} & \multicolumn{2}{|c|}{Approximation 2}\\
				\hline
				$N$ & price & error & price & error & price & error & price & error & price & error & price & error\\
				\hline
				   & \multicolumn{6}{|c}{$Y=0.3$} & \multicolumn{6}{|c|}{$Y=0.5$}\\
				\hline
$1000$ & $0.4361$ & $0.0314$ & $0.3793$ & $0.0274$ & $0.3536$ & $0.0261$ & $0.4734$ & $0.0300$ & $0.4621$ & $0.0301$ & $0.4226$ & $0.0312$\\
$10000$ & $0.4400$ & $0.0098$ & $0.3751$ & $0.0086$ & $0.3689$ & $0.0089$ & $0.5351$ & $0.0107$ & $0.4527$ & $0.0094$ & $0.4161$ & $0.0091$\\
$20000$ & $0.4522$ & $0.0071$ & $0.3884$ & $0.0062$ & $0.3775$ & $0.0063$ & $0.5296$ & $0.0075$ & $0.4602$ & $0.0067$ & $0.4294$ & $0.0066$\\
$50000$ & $0.4458$ & $0.0043$ & $0.3731$ & $0.0038$ & $0.3694$ & $0.0038$ & $0.5303$ & $0.0048$ & $0.4551$ & $0.0043$ & $0.4340$ & $0.0042$\\
$100000$ & $0.4480$ & $0.0031$ & $0.3766$ & $0.0027$ & $0.3685$ & $0.0027$ & $0.5312$ & $0.0034$ & $0.4632$ & $0.0031$ & $0.4274$ & $0.0030$\\

				\hline
				\hline
				   & \multicolumn{2}{|c}{Exact} & \multicolumn{2}{|c}{Approximation 1} & \multicolumn{2}{|c}{Approximation 2}
				   & \multicolumn{2}{|c}{Exact} & \multicolumn{2}{|c}{Approximation 1} & \multicolumn{2}{|c|}{Approximation 2}\\
				\hline
				$N$ & price & error & price & error & price & error & price & error & price & error & price & error\\
				\hline
				   & \multicolumn{6}{|c}{$Y=0.7$} & \multicolumn{6}{|c|}{$Y=0.9$}\\
				\hline
$1000$ & $0.5575$ & $0.0347$ & $0.6160$ & $0.0376$ & $0.5185$ & $0.0329$ & $0.8454$ & $0.0477$ & $0.7726$ & $0.0429$ & $0.6411$ & $0.0399$\\
$10000$ & $0.6397$ & $0.0120$ & $0.5672$ & $0.0110$ & $0.5268$ & $0.0107$ & $0.7941$ & $0.0141$ & $0.7403$ & $0.0132$ & $0.6129$ & $0.0121$\\
$20000$ & $0.6517$ & $0.0086$ & $0.5789$ & $0.0079$ & $0.4967$ & $0.0072$ & $0.8092$ & $0.0101$ & $0.7450$ & $0.0093$ & $0.6013$ & $0.0086$\\
$50000$ & $0.6574$ & $0.0055$ & $0.5721$ & $0.0050$ & $0.5129$ & $0.0048$ & $0.8124$ & $0.0064$ & $0.7429$ & $0.0060$ & $0.5976$ & $0.0054$\\
$100000$ & $0.6500$ & $0.0039$ & $0.5719$ & $0.0035$ & $0.5094$ & $0.0033$ & $0.8142$ & $0.0046$ & $0.7399$ & $0.0042$ & $0.5970$ & $0.0038$\\

				\hline				
		\end{tabular}
\end{subtable}
\end{table}

\subsection{Swing Options}\label{subsec:fin:app:swing}

A swing option is a type of contract used by investors in energy
markets that lets the option holder buy a predetermined quantity of
energy at a predetermined price (strike), while retaining a certain
degree of flexibility in  both the amount purchased and the
price paid. 

Let the maturity date $T$ be fixed and the payoff at time $t<T$ be given by $(S(t)-K)^+$
where  $K$ denotes the strike price, in addition we assume only one unit of the underlying can be exercised any
time period. Let $V(n, s, t)$ denote the price of such a swing option at time $t$ given the spot price $s$
which has $n$ out of $N$ exercise rights left. For $m=1,\dots, M_S$, the dynamic programming principle allows us to
write (see Bertsekas\mycite{Bertsekas05})
\begin{equation}
V(n, s, t_m) =\max
              \left\{
                   \begin{array}{ll}
									 \EXP{V(n, S(t_{m+1}), t_{m+1})|S(t_m) = s}, \\
                                 \\
									\EXP{V(n, S(t_{m+1}), t_{m+1})|S(t_m) = s} + (s-K)^+
                   \end{array}
             \right\},\quad n < N 
\label{eq:swing}
\end{equation}
and $V(n, s, T) = (S(T) - K)^+$, $n \le N$ and $V(0, s, t)=0$.
In order to solve the recursion equation we rely on the modified
version of the LSMC, introduced in
Longstaff-Schwartz\mycite{LSW01}, detailed in Boogert and C. de Jong\mycite{BDJ08, BDJ10} where the continuation value is approximated with a linear regression with  $m=1,\dots, M_S$
\begin{equation*}
	\EXP{V(n,  S(t_{m+1}), t_{m+1})|S(t_m) = s}\simeq a_0 + a_1 S(t_m) + \dots, +a_B S^B(t_m), \quad n < N.
\end{equation*}
In our experiments, we used simple power polynomials with $B=3$, but the regression may be performed on a different set of basis functions as well (see Boogert and de
Jong\mycite{BDJ10} for a comparison with other basis functions).

Several other approaches have been proposed: for instance one may solve the recursion by adapting the method of Ben-Ameur et al.\mycite{BBKL2007} or might use the quantization technique of Bardou et al.\mycite{BBP07}. In alternative, one can also use the tree method of Jaillet et al.\mycite{JRT04} or the Fourier cosine expansion in Zhang and C. Oosterlee\mycite{ZhangOosterlee13_b} taking advantage of the explicit form of the \chf\ of OU-BCTS process.

In this last example we assume a OU-CGMY driven market model with $Y<0$, namely a combination of mean-reverting compound Poisson processes with positive and negative jumps. We consider a different set of parameters compared to the cases illustrated so far, namely we take $(b, C, G, M)=(25, 80, 10.5, 15.5)$ and let $Y$ vary. The parameters are very different than the other two examples and are chosen to mimic realistic price path as shown in Figure\myref{fig:ou:cgmy:negative:trajectories}. We also remark that, due to the fact that energy markets are very seasonal and spikes occur in clusters due to for instance, cold spells, one could assume that the intensity of the compound Poisson processes is a seasonal time-dependent function. The results in Subsection\myref{subsec:fin:act} and the simulation algorithms in Section\myref{sec:ou:ts:cgmy:alg} can be easily adapted taking a step-wise approximation of the intensity function.

Table\myref{tab:swing} shows the values and MC errors relatively to the pricing of a $120-120$ swing option with maturity $T=1$ and strike price $K=20$, namely the holder has $N=120$ rights and must exercise all of them. We observe that the LSMC combined with Algorithm\myref{alg:ou:cts} produces unbiased results for all selected $Y$'s and apparently $2\times 10^4$ simulations are required to attain an acceptable convergence. In contrast to the Asian option case, it does not make sense to adopt the approximation of the law of $Z(t)$ in\refeqq{eq:sol:OU} with that of
$e^{-k\, t}L(t)$ (\emph{Approximation 2} in Subsection\myref{subsec:fin:app:asian}) because this approach returns another compound Poisson process and therefore does not provide a computational advantage. Overall, it is evident that our
newly developed approach can achieve high accuracy as well as efficiency.

\begin{figure}
\caption{Sample trajectories of OU-CGMY processes with $\left(b, C,
G, M\right) = \left(10, 10, 1.75, 1.25\right)$ and $Y\in\{-0.3, -0.5,
-0.7, -0.9\}$}\label{fig:ou:cgmy:negative:trajectories}
        \begin{subfigure}[c]{.5\textwidth}{
                \includegraphics[width=70mm]{./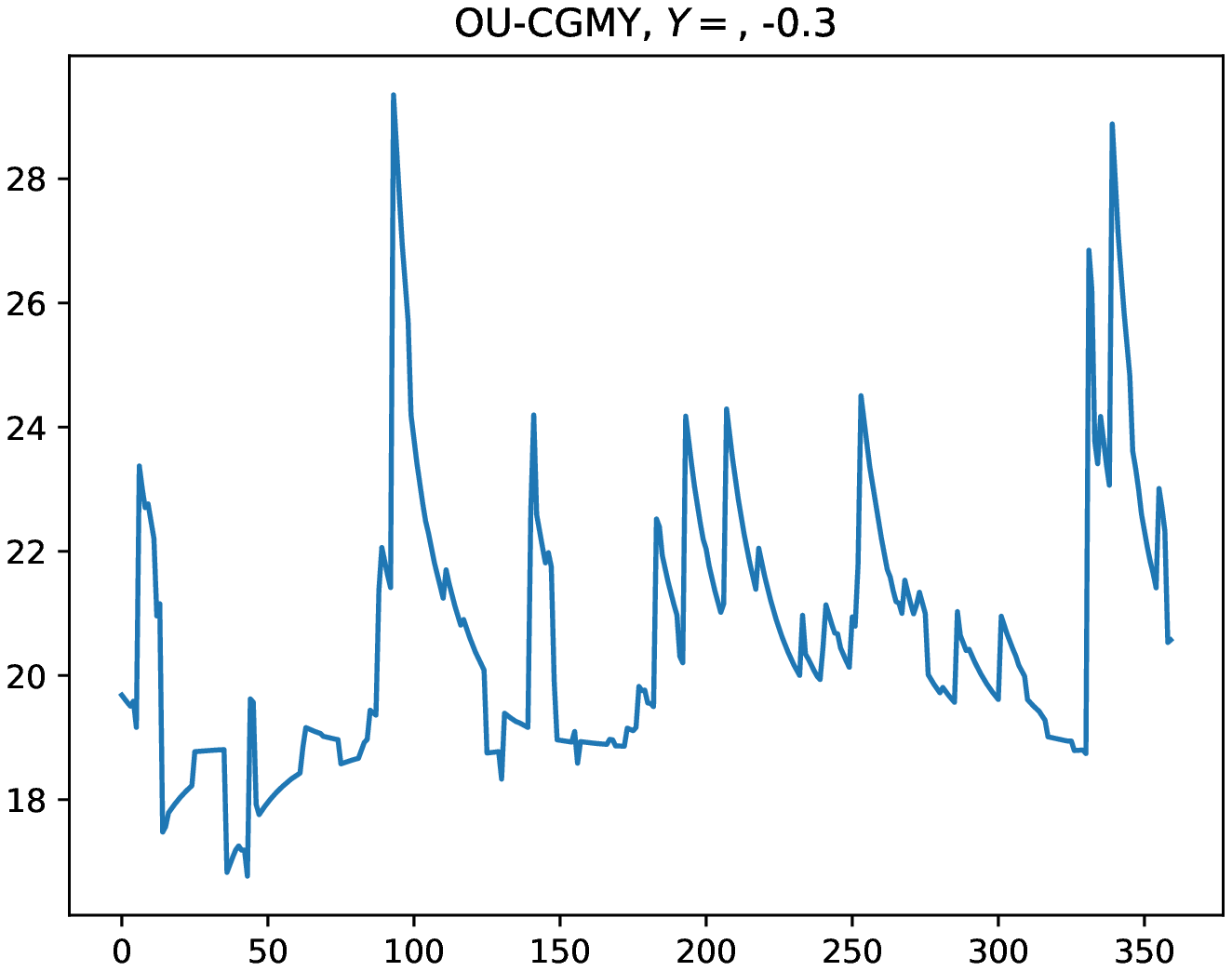}
                }
        \end{subfigure}
        \begin{subfigure}[c]{.5\textwidth}{
                \includegraphics[width=70mm]{./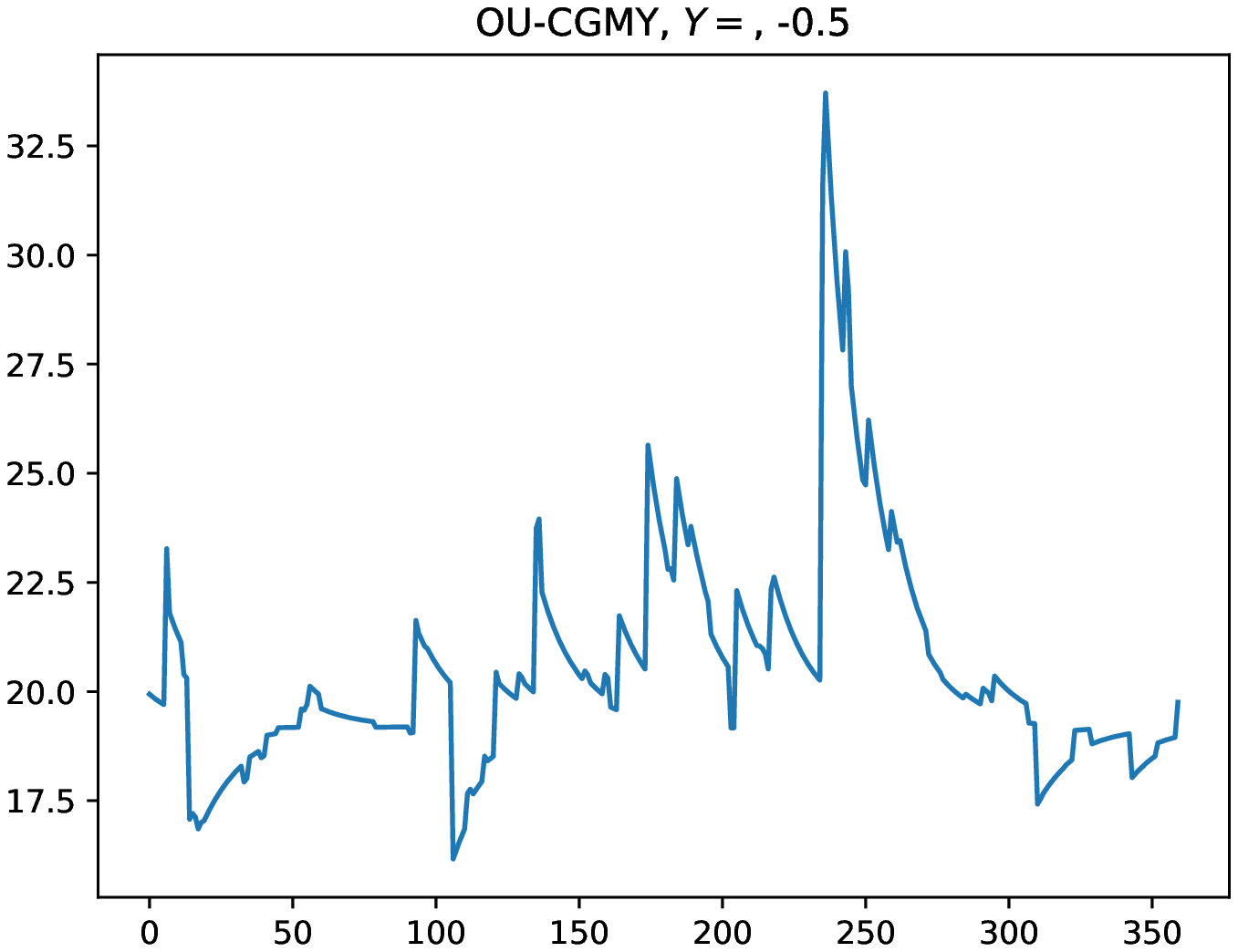}
                }
        \end{subfigure}
			\\
        \begin{subfigure}[c]{.5\textwidth}{
                \includegraphics[width=70mm]{./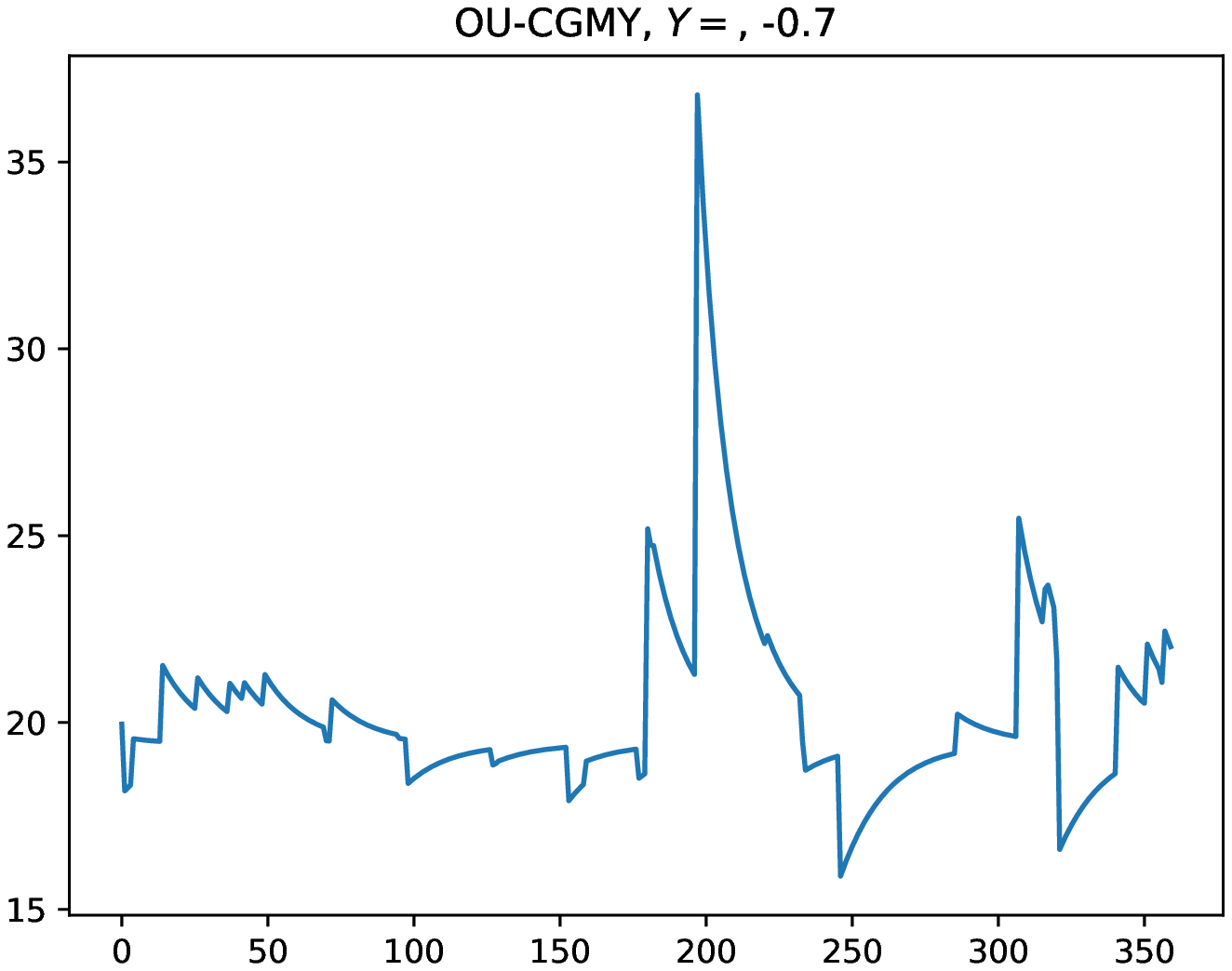}
                }
        \end{subfigure}
        \begin{subfigure}[c]{.5\textwidth}{
                \includegraphics[width=70mm]{./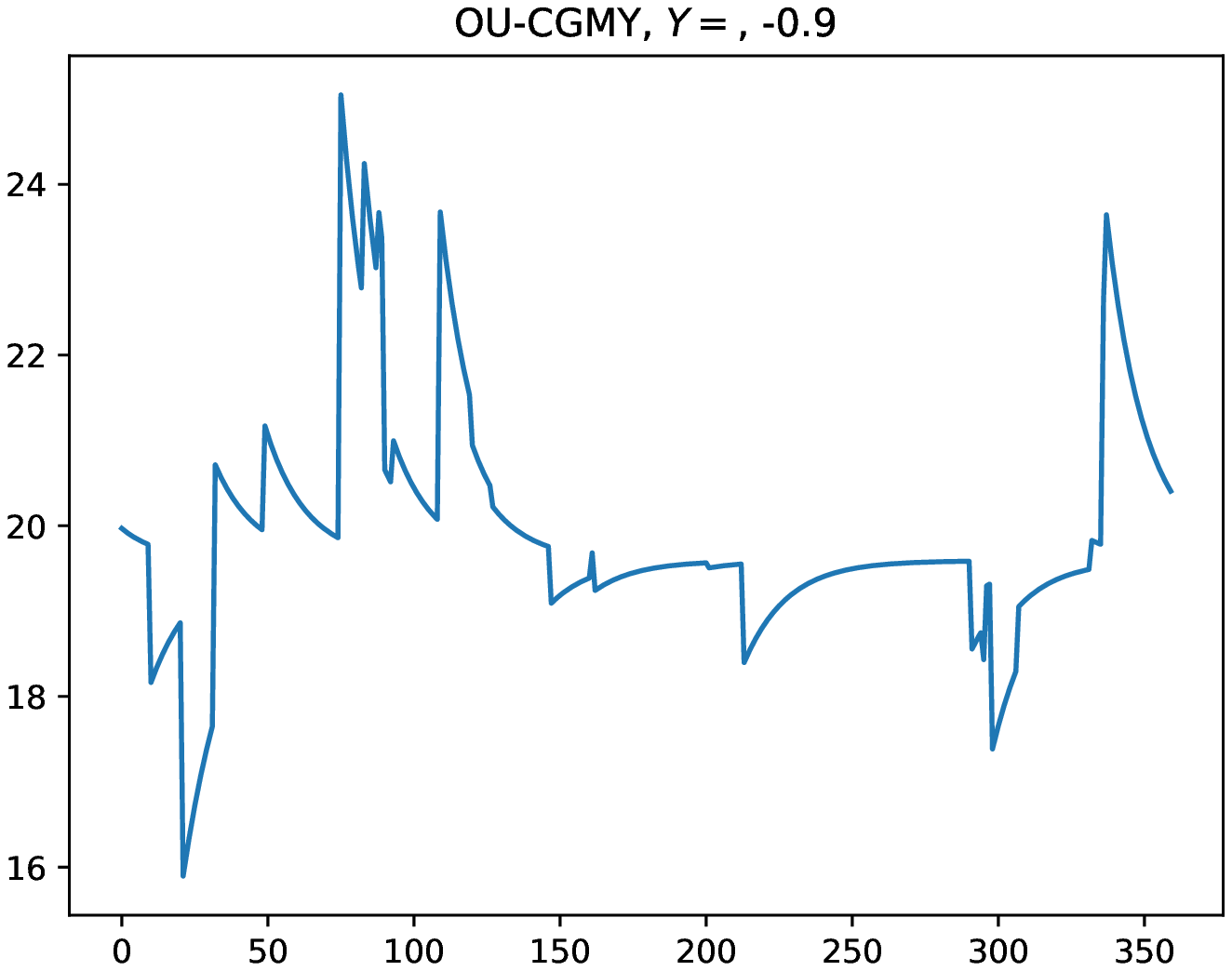}
                }
        \end{subfigure}
\end{figure}%
\begin{table}[!htb]
\caption{120-120 Swing option. $K=20$, $T=1$}\label{tab:swing}
\begin{subtable}{\textwidth}
		\centering\scriptsize
			\begin{tabular}{|c|c|c|c|c|c|c|c|c|}
				\hline
					& \multicolumn{2}{|c}{$Y=0.3$} & \multicolumn{2}{|c|}{$Y=0.5$}
					& \multicolumn{2}{|c}{$Y=0.7$} & \multicolumn{2}{|c|}{$Y=0.9$}\\
				\hline
				$N$ & price & error & price & error & price & error & price & error\\
				\hline
$1000$ & $101.965$ & $2.873$ & $80.500$ & $2.446$ & $66.634$ & $2.222$ & $51.117$ & $1.983$\\
$10000$ & $99.565$ & $0.902$ & $79.848$ & $0.778$ & $63.451$ & $0.682$ & $49.801$ & $0.599$\\
$20000$ & $98.328$ & $0.633$ & $79.774$ & $0.550$ & $64.322$ & $0.489$ & $49.958$ & $0.426$\\
$50000$ & $98.187$ & $0.399$ & $79.585$ & $0.348$ & $63.686$ & $0.307$ & $50.223$ & $0.269$\\
$100000$ & $98.270$ & $0.283$ & $79.284$ & $0.245$ & $63.949$ & $0.217$ & $50.487$ & $0.192$\\
				\hline
		\end{tabular}
\end{subtable}
\end{table}

\subsection{Application to Forward Markets}\label{sec:extensions}
So far, we have discussed the pricing of derivative contracts depending on the day-ahead price which is modeled as a OU process. On the other hand, the main point of Section\myref{sec:ou:ts:cgmy} is the study of the additive process $Z(t)=\int_0^te^{-b\,(t - u}\,dL(u)$ irrespective of the construction of a OU-BCTS or a OU-CGMY process. For instance, it is common practice to model the forward price as a geometric Brownian motion with a time-dependent volatility function that captures the Samuelson effect. For instance, Kiesel et al.\mycite{KSB09} have considered a two-factor market dynamics where one of the two factors depends on $\int_0^te^{-b\,(t - u)}\,dW(u)$ where $W(\cdot)$ is a standard Wiener process.

Beyond the Gaussian world, Piccirilli et al.\mycite{PSV20} have recently proposed a class of models, named Non-Overlapping-Arbitrage models (NOA), with the aim or capturing the Samuelson effect and reproducing the different levels and shapes of the implied volatility profiles displayed by options. 

In particular, they assume a stochastic evolution of a generic future price at time $t$, maturity $T$, $t\le T\le T_1 < T_2$, and with delivery period $[T_1, T_2]$ is described by
\begin{eqnarray}
	F(t, T_1, T_2) &=& F(0, T_1, T_2) + \int_0^t\Gamma_1(u, T_1, T_2) dL_1(u) + \Gamma_2(T_1, T_2)\,X_2(t) \nonumber \\
	&=& F(0, T_1, T_2) + X_1(t, T_1, T_2) + \Gamma_2(T_1, T_2)\,X_2(t)
\label{eq:noa}
\end{eqnarray}
where $L_1(\cdot)$ and $X_2(\cdot)$ are two independent \Levy\ processes. Moreover,
\begin{equation}
\Gamma_1(u, T_1, T_2) = \frac{\gamma_1}{b\,(T_2 - T_1)}\left(e^{-b\,(T_1 -u)} - e^{-b\,(T_2 -u)}\right).
\label{eq:gamma1}
\end{equation}
\begin{equation}
\Gamma(T_1, T_2) = \frac{1}{T_2 - T_1}\int_{T_1}^{T_2}\gamma(u)\,du
\label{eq:gamma2}
\end{equation}
are two deterministic functions that are meant to capture the Samuelson effect in option pricing (see also Jaeck and Lautier \mycite{JL16}). Indeed, in the spirit of Benth et al.\mycite{BPV19} and Latini et al.\mycite{LPV19}, the special form of the coefficients
arises from the implicitly underlying assumption that the future can be written as the average
over an underlying artificial futures price with instantaneous delivery.

Although Piccirilli et al.\mycite{PSV20} illustrate the application of their model under the assumption that $L_1(\cdot)$ and $X_2(\cdot)$ are centered NIG processes, the setting can be modified taking two independent BCTS or CGYM processes. Of course, such models are related to the additive process studied in Section\myref{sec:ou:ts:cgmy}, because, after some algebra it results
\begin{eqnarray*}
	 X_1(t, T_1, T_2) &=& \frac{\gamma_1}{b\,(T_2 - T_1)}\left(e^{-b\,(T_1 -T)} - e^{-b\,(T_2 -
	T)}\right) \int_0^t e^{-b\,(t -u)} dL_1(u)  \\
	&=&  \Gamma_1(T, T_1, T_2) Z(t), 
\end{eqnarray*}
hence the \chf\ and in the particular, the simulation procedure of the skeleton of the additive process $X_1(\cdot, T_1, T_2)$ can be derived from those of $Z(\cdot)$. 
It is worthwhile noticing that the Piccirilli et al.\mycite{PSV20} found an explicit form of the \chf\ of $X_1(t, T_1, T_2)$ when $L_1(\cdot)$ is a centered NIG process, whereas they do not provide any procedure to simulate such a process. In alternative, taking BCTS or CGMY processes and slightly modifying our results, one can get both the \chf\ and the simulation procedure giving the possibility to price other derivative contracts via  Monte Carlo simulations. We omit an explicit proof to avoid overloading the paper with lengthy details.   

\section{Concluding Remarks}\label{sec:conclusions}

In this study we have investigated the pricing of energy derivatives in markets driven by classical tempered stable and CGMY processes of OU type with finite variation. To this end, we have derived the \chf\ of the transition law of such processes in closed form such that we can obtain the non-arbitrage conditions and spot prices that are consistent with the forward curve. In addition, extending the work of Cufaro Petroni and Sabino\mycite{cs20_3}, we have detailed efficient algorithms for the simulation of the skeleton of classical tempered stable and CGMY processes of OU type with particular focus to the case when whey coincide with compound Poisson processes. We have illustrated the applicability of these results to the pricing of three common derivative contracts in energy markets, namely a strip of daily call options, an Asian option with European style and a swing option. 
In our numerical experiments we have selected a one-factor model in order to better highlight the features of our finding, nevertheless the extension to two-factor models in the same vein of Schwartz and Smith\mycite{SchwSchm00} is straightforward.
In the first example we have made use of the explicit knowledge of the \chf\ to implement the pricing with the FFT-based technique of Carr and Madan\mycite{Carr1999OptionVU} and have compared the outcomes to those obtained via MC simulations. In the second example, we have priced Asian options with MC simulations where we have also adopted two common approximations techniques. These approximations provide reliable values if the time steps of the time grid are relatively small but if one considers a forward start contract the outcome is really biased.  Although the parameter calibration and the model selection is not the main focus of this study, these observations give an indication of how one could conceive a simplified procedure for the parameters estimation. In addition, we have shown that the proposed simulation algorithm, combined with the LSMC approach of  Boogert and C. de Jong\mycite{BDJ08, BDJ10}, provides an efficient and accurate pricing of a one year $120-120$ swing option. Furthermore, our results are not restricted to OU processes and to the modeling of spot prices. Indeed, in the spirit of Benth et al.\mycite{BPV19}, Latini et al.\mycite{LPV19} and Piccirilli et al.\mycite{PSV20} they can be adapted to capture the Samuelson effect and to volatility smiles.

Finally, future studies could cover the extension to  a multidimensional
framework for instance adopting the view of Luciano and Semeraro\mycite{SL2010}, Ballotta and Bonfiglioli\mycite{BB2013} or the recent approaches of Gardini et al.\mycite{Gardini20b, Gardini20a} and Lu\mycite{lu2020}. A last topic deserving
further investigation is the time-reversal simulation of the OU processes generalizing the results of Pellegrino and
Sabino\mycite{PellegrinoSabino15} and Sabino\mycite{Sabino20a} to the
case of  classical tempered stable and CGMY processes.


\section*{Acknowledgements}
I would like to express my gratitude to Matteo Gardini and Nicola Cufaro Petroni for their help relatively to the application of the FFT method.

        \bibliographystyle{plain}
        \bibliography{biblioAll}
\end{document}